\documentclass{article}

\usepackage{arxiv}

\usepackage[utf8]{inputenc} 
\usepackage[T1]{fontenc}    
\usepackage{hyperref}       
\usepackage{url}            
\usepackage{amsfonts}       
\usepackage{nicefrac}       
\usepackage{microtype}      
\usepackage{lipsum}		
\usepackage{graphicx}
\usepackage{natbib}
\usepackage{doi}

\usepackage{amsmath, amsfonts, mathtools, amssymb} 
\usepackage{amsthm} 
\usepackage[ruled,vlined]{algorithm2e} 
\usepackage{dsfont} 
\usepackage{xcolor} 
\usepackage{booktabs} 
\usepackage{float} 
\usepackage{enumitem} 
\usepackage{subcaption}
\usepackage{stfloats} 
\usepackage{xurl} 

\usepackage{balance}
\newtheorem{theorem}{Theorem}
\newtheorem{corollary}{Corollary}
\newtheorem{definition}{Definition}
\newtheorem{lemma}{Lemma}

\newcommand{\rel}{\mathsf{rel}}
\renewcommand{\P}{\mathsf{Prob}}
\newcommand{\D}{\mathds{P}}
\newcommand{\R}{\mathds{R}}

\newcommand{\util}{\mathcal{U}}
\newcommand{\unfair}{\mathcal{V}}
\newcommand{\tradeoff}{\mathcal{T}}
\newcommand{\group}{\mathsf{g}}
\newcommand{\score}{\mathsf{score}}
\newcommand{\bigO}{\mathcal{O}}

\newcommand{\docset}{\mathcal{D}}
\newcommand{\query}{Q}
\newcommand{\snippet}[1]{$\texttt{\footnotesize{#1}}$}

\DeclareMathOperator*{\argmax}{argmax} 
\newcommand{\Ex}{\mathbf{E}}
\DeclarePairedDelimiterX\Barg[1]{\lbrack}{\rbrack}{#1}
\DeclarePairedDelimiterX\multiset[2]{\lparen\lparen}{\rparen\rparen}{\genfrac{}{}{0pt}{}{#1}{#2}}

\title{Scoring is Not Enough: Addressing Gaps in\\Utility-Fairness Trade-offs for Ranking}

\author{ \href{https://orcid.org/0000-0003-1540-8511}{\includegraphics[scale=0.06]{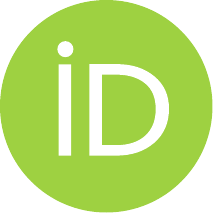}\hspace{1mm}Shubham Singh} \\
	Department of Computer Science\\
	University of Illinois Chicago\\
	Chicago, IL 60607 \\
	\texttt{ssing57@uic.edu} \\
	\And
	\href{https://orcid.org/0000-0002-7826-8555}{\includegraphics[scale=0.06]{orcid.pdf}\hspace{1mm}Ian A.~Kash} \\
	Department of Computer Science\\
	University of Illinois Chicago\\
	Chicago, IL 60607 \\
	\texttt{iankash@uic.edu} \\
    \And
	\href{https://orcid.org/0000-0002-6479-9769}{\includegraphics[scale=0.06]{orcid.pdf}\hspace{1mm}Mesrob I.~Ohannessian} \\
	Department of Electrical and Computer Engineering\\
	University of Illinois Chicago\\
	Chicago, IL 60607 \\
	\texttt{mesrob@uic.edu} \\
}

\renewcommand{\headeright}{}
\renewcommand{\undertitle}{}
\renewcommand{\shorttitle}{Scoring is Not Enough}

\hypersetup{
pdftitle={Scoring is Not Enough: Addressing Gaps in Utility-Fairness Trade-offs for Ranking},
pdfsubject={cs.LG, cs.CY},
pdfauthor={Shubham Singh, Ian A. Kash, Mesrob I. Ohannessian},
pdfkeywords={Ranking, Fairness, Tradeo-offs},
}

\begin{document}
\maketitle

\begin{abstract}
	Scoring functions are used to represent the relevance of individual documents. In modern information retrieval or recommendation systems, they are often learned from data and play a pivotal role in ranking sets of documents or items in a way that maximizes utility to a query or user. With the recent interest in algorithmic fairness, the success of scoring has naturally led to methods that learn scores that simultaneously trade off fairness and utility. In this work, we show that in stark contrast with utility-centric objectives, scoring is sub-optimal in achieving all utility-fairness trade-offs. We establish this with a series of counter-examples with a generic fairness formulation. We show that the issue persists whether we have a deterministic scoring function or a randomized one, or whether we measure fairness at the scope of a single query or across multiple queries. On the positive side, we empirically demonstrate that semi-greedy post-processing has the potential to achieve much better trade-offs, often approaching the ideal of exhaustive post-processing in a tractable way.
\end{abstract}

\keywords{Ranking \and Fairness \and Trade-offs}


\section{Introduction}
\label{sec:intro}

Ranking is a basic task that has far-reaching applications in sociotechnical systems. Information retrieval and recommendation systems have provided the main impetus to understanding ranking, motivated by presenting documents in response to a query or recommending items to a user, in such a way as to maximize engagement. However, ranking is increasingly recognized to emerge in other diverse settings. For example, one may want to rank sites to determine the order in which they are inspected or to rank medical tests to decide the order in which they are administered. A primary utility often quantifies how well these ranking tasks are performed. With the diversity of applications, however, we increasingly recognize the importance of trading-off utility with secondary desirable objectives, such as fairness. To achieve such trade-offs in a data-driven fashion, the field has continued to adhere to a main paradigm: learning how to score documents and then, at inference-time, evaluating scores and ranking via sorting them. (See Section \ref{sec:related-work-mini} below for a quick overview, with a full survey of relevant work in Appendix \ref{sec:related-work}.) \looseness=-1

In this paper, we challenge this paradigm. More precisely, through a series of analytical, simulation, and real examples, \textbf{we show that scoring cannot cover the entire range of achievable trade-offs} and often falls considerably short of it. To make this thesis clear, we adhere to the language of information retrieval and to a generic family of binary document-group fairness metrics as a secondary objective, which can be instanced to some popular metrics via a choice of weights. The key insight is that, \textbf{unlike utility,  objectives such as fairness may not be decomposable into individual contributions} of documents.

We define scoring as assigning to each document, based on its features, a numerical score. This assignment may be deterministic or, in the case of some models such as Plackett-Luce,~\cite{plackettAnalysisPermutations1975,luceIndividualChoiceBehavior1959}, randomized. We assume to operate in the infinite-data learning regime, by substituting learning with two abstractions: (i) the true relevance and the group membership of documents are known and (ii) the ranking procedure can be determined based on the exact knowledge of how documents combine to form queries. In Section \ref{sec:setting}, we make these definitions and assumptions mathematically precise. Then, in Section \ref{sec:scorability}, to define the trade-off achieving aspirations we have of scoring, \textbf{we introduce the concept of scorability}. In Section \ref{sec:suboptimality}, we show that scoring is not enough to achieve optimal trade-offs, through analytical counterexamples to scorability and numerical examples. 

In Section \ref{sec:alt-scoring}, we give a positive spin to these findings. We look at an alternative to scoring: learning relevances without regard to the secondary objective and then, ex-post, accounting for it. Such post-processing, also known as re-ranking, needs to explore the space of rankings and can be intractable, except in very simple settings. Current techniques often resort to greed. In Section \ref{sec:evaluation}, relaxing greed, \textbf{we empirically demonstrate that beam-search can efficiently cover achievable trade-offs} much more than scoring and also better than greed alone.

Fair learning-to-rank via scoring is an in-processing approach and, as such, is touted to offer more flexibility than post-processing~\citep{singhPolicyLearningFairness}. Thus, the main message of this paper is to alert the community to the fact that, when striving to trade off a decomposable utility and a non-decomposable objective such as fairness, \textbf{the nature of scoring could make it limiting}. 
Figure \ref{fig:overall-fig} gives an overview of the key ideas and results of the paper.

\begin{figure}
    \centering
    \includegraphics[width=\linewidth]{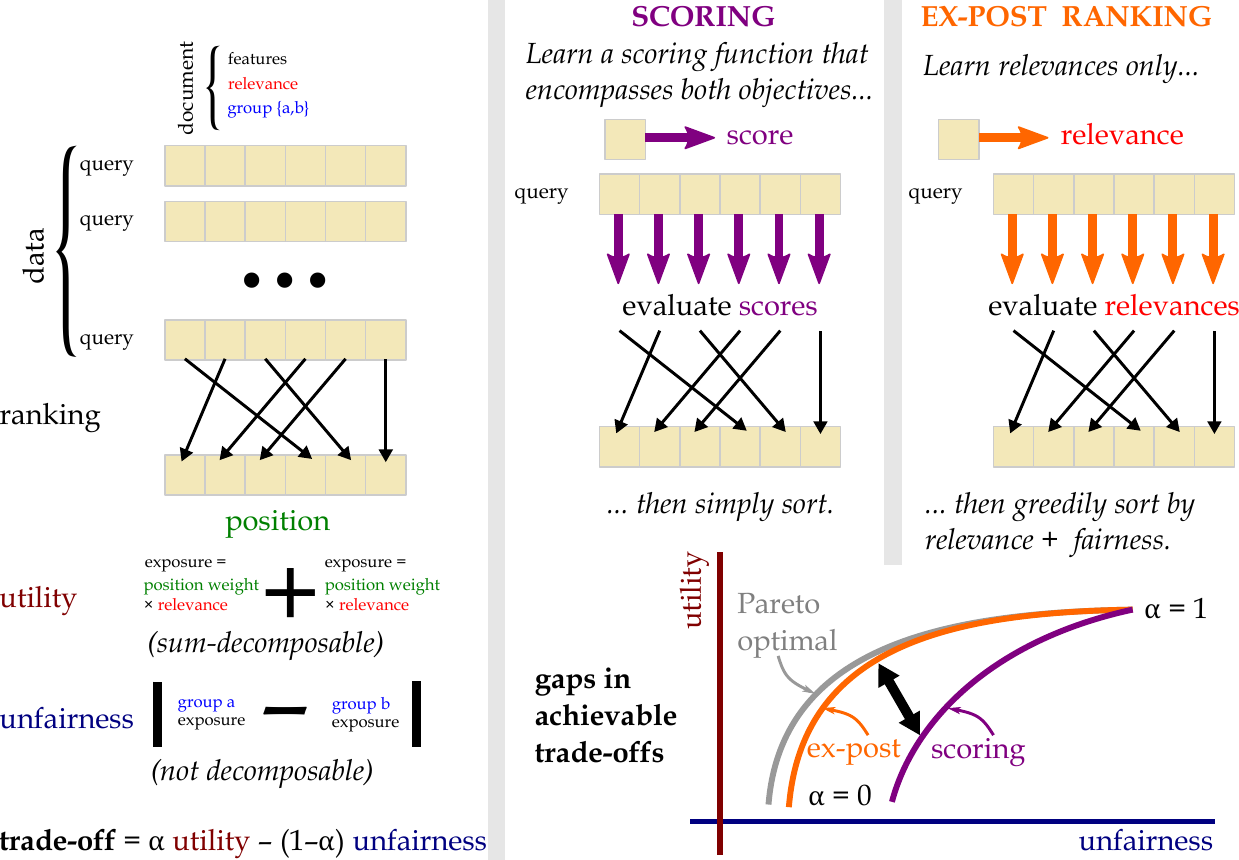}
    \caption{Overview of the paper, highlighting key ideas about the limitations of scoring for utility-fairness trade-offs}\vspace{-16pt}
    \label{fig:overall-fig}
\end{figure}



\vspace{-6pt}
\subsection{Related Work}
\label{sec:related-work-mini}
\vspace{-6pt}
To optimize the utility-fairness trade-off, we utilize the low-variance Plackett-Luce gradient estimator by \citeauthor{gadetskyLowvarianceBlackboxGradient2019}~\citep{gadetskyLowvarianceBlackboxGradient2019}. This aligns our work with recent reinforcement learning approaches for scoring~\citep{singhPolicyLearningFairness,ge2022toward}.
We focus specifically on document-group fairness via interaction disparity per query~\citep{gorantlaOptimizingGroupFairPlackettLuce2023}. This distinguishes our scope from works targeting user fairness~\citep{greenwood2024user} or accumulated cross-query interactions~\citep{memarrastFairnessRobustLearning2021,ovaisiFairnessInteractionRanking2024}.
While some post-processing methods enforce fixed fairness~\citep{geyik2019fairness,gorantlaOptimizingGroupFairPlackettLuce2023}, others address utility-fairness trade-offs using greedy heuristics~\citep{gao2020toward,ovaisiFairnessInteractionRanking2024}. Crucially, while \citeauthor{ovaisiFairnessInteractionRanking2024} fits our fairness framework, prior literature has not recognized that scoring itself is inherently suboptimal. \looseness=-1

Our work distinguishes itself by explicitly highlighting the inadequacy of scoring-based ranking methods for achieving optimal utility-fairness trade-offs. A detailed discussion of the literature is given in Appendix~\ref{sec:related-work}.
\section{Problem Setting} \label{sec:setting}



We start by defining the general ranking setting in which we operate. We use the following notation. Let the set of possible documents be denoted by $\docset$ and individual documents be denoted by $d \in \docset$. We associate with each document a relevance $\rel(d) \in [0,1]$. We assume that each query consists of a set
$\query$ of $m$ documents from $\docset$.

Relevances, in general, depend on the query. By forgoing this dependence, we are restricting to a specific case (no personalization). However, when a limitation is present in a specific case it is also present in the general case in that instance. Thus the conclusions of the paper hold generally.

\paragraph{Deterministic and Random Ranking} A \emph{deterministic ranking} $\sigma_\query$ is defined as a permutation of the $m$ documents in $\query$, in a way that may depend on $\query$. $\sigma_\query(i) = d$ is interpreted as placing document $d\in \query$ at position $i$. We assume that no selection is made, that is all documents in the query are ranked. More precisely, for each $\query$, $\sigma_\query$ is equivalent to a bijective map $[m]\rightarrow \query$.
We also allow for \emph{randomized ranking}. In this case, $\sigma_\query$ is a random map defined through a distribution, conditional on $\query$, over all $m!$ permutations. In other words, given $\query$, $\sigma_\query $ is sampled from this conditional distribution. Although $\sigma_\query$ always depends on $\query$, whenever $\query$ is explicitly specified we drop the subscript $_\query$ and simply write $\sigma$.

\paragraph{Utility} Utility, in general, is a function from $(\sigma, \query)$ to $\R_+$. In most ranking applications, however, this utility decomposes as a sum. A document at position $i$ delivers a utility based on its relevance as well as a factor that depends only on position, captured by non-increasing weights $w_i$. The \emph{utility} of a deterministic ranking can then be given as:
\begin{equation} \label{eq:utility}
    \util(\sigma, \query) = \sum_{i} w_i \rel(\sigma_\query(i)).
\end{equation}
\vspace{-6pt}
Examples:
\begin{itemize}
\item[] \textbf{Recommendation systems} If we use $w_i = 1/(\log_2 i + 1)$ to capture position bias, utility becomes the discounted cumulative gain (DCG). Here, $w_i \rel$ represents an interaction probability, thus maximizing utility boosts interaction.
\item [] \textbf{Inspection sites} Documents could refer to inspections sites, with $\rel(d)$ the probability of a failure occurring at site $d$, and $\sigma_\query$ the order in which sites are visited to correct failures. Then, choosing $w_i = - (i-1)/m$ and maximizing utility minimizes average uncorrected failure time.
\end{itemize}
\vspace{-6pt}
For randomized ranking, because of its linearity, the notion of utility extends in a straightforward way to expected utility. We keep the notation the same, and distinguish the two instances based on context:
\begin{equation} \label{eq:expected-utility}
    \util(\sigma, \query) = \sum_{i} w_i \Ex\left[\rel(\sigma_\query(i)) \vert \query \right].
\end{equation}
Note that the expectation is over random rankings, while the query $\query$ is fixed.
\vspace{-6pt}
\paragraph{Unfairness} In addition to utility, we are concerned with giving documents a fair interaction through ranking. For the purpose of defining fairness, we think of $\docset$ as being partitioned into two groups $G_a$ and $G_b$, and use $\group(d) \in \{a, b\}$ as the group-membership function. We capture fairness, or rather \emph{unfairness}, by measuring the disparity of relative interaction between the two groups:
\begin{equation} \label{eq:unfairness}
    \begin{multlined} 
    \unfair(\sigma, \query) = \left| \frac{1}{|G_a|} \sum_{i:\group(\sigma_\query(i)) = a} \!\!\!w_i \rel(\sigma_\query(i)) \right.  \left. - \frac{1}{|G_b|} \sum_{i:\group(\sigma_\query(i))=b} \!\!\!w_i \rel(\sigma_\query(i)) \right|  
    \end{multlined}
\end{equation}
By convention, if either group is not present in $\query$, we assume fairness is vacuously achieved and set $\unfair=0$.
Examples:
\begin{itemize}
\item[] \textbf{Recommendation systems} If the documents are interview candidates, utility captures consideration probability, and the groups refer to two racial groups, this disparity would be the difference of per-candidate consideration across these groups.
\item [] \textbf{Inspection sites} If $G_a$ and $G_b$ are two geographical inspection zones, this would be the disparity in failure time correction across zones. 
\end{itemize}

Other variants of \eqref{eq:unfairness} can be justified in various contexts, e.g., normalizing by the total relevance of each group instead of their sizes, using only $w$ in the sum instead of $\rel$, etc. We choose \eqref{eq:unfairness} because it is a common disparity measure and gives a concrete instance to illustrate the phenomenon that we aim to shed light on. This phenomenon, however, is likely not limited to this choice, as we discuss later. 

For randomized ranking, some authors (e.g., \cite{singhPolicyLearningFairness}) aggregate the disparity over the random rankings within the absolute value in \eqref{eq:unfairness}. This can be thought of as corresponding to a long session in which multiple rankings may be produced for a single query. However, since \eqref{eq:unfairness} measures per-query unfairness and since during a single query only a single ranking is sampled, we adhere to the more natural choice of aggregating outside of the absolute value, i.e., a short session:
\begin{equation} \label{eq:expected-unfairness}
  \begin{multlined} 
  \unfair(\sigma, \query) = \Ex\left[ \left| \frac{1}{|G_a|} \sum_{i:\group(\sigma_\query(i)) = a} \!\!\!w_i \rel(\sigma_\query(i)) \right. \right. \left. \left. - \frac{1}{|G_b|} \sum_{i:\group(\sigma_\query(i))=b} \!\!\!w_i \rel(\sigma_\query(i)) \right| ~\Bigg\vert \query \right],
   \end{multlined}
\end{equation}
Again, the expectation is over random rankings while the query $\query$ is fixed, and we keep the notation the same and disambiguate based on context. Readers interested in the effect of session length, albeit in a slightly different setting, can be referred to~\citep{vardasbi2022probabilistic}.




\paragraph{Problem Instances} We call the triplet $(\docset, \rel, \group)$ a \emph{problem instance}. The space of utility and unfairness pairs $(\util,\unfair)$ is spanned through the choice of the weights $w$. Eqs. \eqref{eq:utility} and \eqref{eq:unfairness}, or Eqs. \eqref{eq:expected-utility} and \eqref{eq:expected-unfairness}, give the functional form of $(\util,\unfair)$ in the deterministic, or randomized, ranking cases respectively. These functional forms then become concrete for each problem instance $(\docset, \rel, \group)$.

\paragraph{Trade-offs} With the primary objective of utility and the secondary objective of fairness, we are interested in striking a good trade-offs between utility \eqref{eq:utility} and unfairness \eqref{eq:unfairness}. Trade-offs at level $\alpha$ can be defined through the combined objective function:
\begin{equation} \label{eq:trade-off}
    \tradeoff_\alpha(\sigma, \query) = \alpha \ \util(\sigma, \query) - (1-\alpha) \ \unfair(\sigma, \query),
\end{equation} 
where the minus in the second term refers to the fact that we strive to increase utility, but to decrease unfairness.

\paragraph{Deterministic Scoring} Scoring, in a typical learning-to-rank framework, produces a function $\score(d)~:~\docset \to \R$ using either pairwise or list-wise data, as overviewed in Appendix \ref{sec:related-work}. At query-time, $\score(d)$ is computed for each document in the query $\query$ and the ranking is produced by sorting the documents in decreasing score order. The score-induced ranking can then be defined as:
\begin{equation} \label{eq:score-ranking}
    \sigma_{\query,\score} = \mathsf{sort}_\downarrow(\query,\score).
\end{equation}
With limited data, $\score$ typically does not have direct access to $\rel$, but rather depends on features of the document, potentially including its group membership. To narrow down on the issue at hand, consider instead the \textbf{infinite-data regime} where $\score$ knows exactly $\rel(d)$ as well as $\group(d)$. Since this information is sufficient to evaluate both \eqref{eq:utility} and \eqref{eq:unfairness}, and thus \eqref{eq:trade-off}, there is no loss of capability in restricting $\score$ to be an arbitrary function of these two attributes. Thus, moving forward, $\score(d)$ is always of the form
\[
    \score\left(\rel(d),\group(d)\right)~:~[0,1] \times \{a,b\} \to \R.
\]
Note that many documents may have the same (relevance, group) pair, and we can think of this pair as a document \emph{type}. In what follows, even if documents themselves do not repeat in a query, document types may. Lastly, to avoid ties, we assume $\score$ never assigns distinct document types in $\docset$ the same score.\footnote{This is not a restrictive assumption, because ultimately we need to break ties, and this can only be done via either deterministic or stochastic perturbation. If the perturbation is deterministic, then it is equivalent to the assumption. If the perturbation is stochastic, then we simply jump to the case of randomize scoring, which doesn't make this assumption.}


\paragraph{Randomized Scoring} Deterministic scoring is widely used, but a softer variant gives more flexibility. In this case, $\score(d)$ is a random variable drawn independently from other documents in $\query$, from a distribution $f_d$ on $\R$ that depends uniquely on $d$. In the infinite-data regime, $f_d$ depends uniquely on $\rel(d)$ and $\group(d)$. In this case, the score-induced ranking $\sigma_{\query,\score}$ of \eqref{eq:score-ranking} is a randomized ranking. The main practical case of randomized scoring that we consider is the Plackett-Luce (PL) model, a ubiquitously used model for list-wise ranking~\citep{xiaPlackettLuceModelLearningtorank2019, maLearningtoRankPartitionedPreference2021,gadetskyLowvarianceBlackboxGradient2019,gorantlaOptimizingGroupFairPlackettLuce2023,oosterhuisComputationallyEfficientOptimization2021, singhPolicyLearningFairness}. The best way to relate PL to randomized scoring is to note that it is equivalent to learning a deterministic score $h(d)$ first, and then to randomizing it by adding independent standard Gumbel noise to each:
\[
    \score_\mathsf{PL}(d) = h(d) + \mathsf{Gumbel}.
\]
This then induces a randomized ranking:
\[
    \sigma_{\query,\mathsf{PL}} = \mathsf{sort}_\downarrow\left(\query,\score_\mathsf{PL}\right).
\]

As proposed by \citet{plackettAnalysisPermutations1975} and \citet{luceIndividualChoiceBehavior1959}, an alternative way of describing PL is as a sequential generation of the ranking itself, through the probability of a document being ranked at the next position $i$, conditioned on the remaining documents up till that point. 
\begin{equation} \label{eq:plackett-luce}   
    \P\left(\sigma_{\query,\mathsf{PL}} = \sigma \vert \query \right) = \prod_{i=1}^{m} \underbrace{\left(\frac{\exp({h(\sigma(i))}}{\sum_{j=i}^{m} \exp({h(\sigma(j))}}\right)}_{\P\left(\sigma(i)|\query\setminus (\sigma(k))_{k<i}\}\right)}.
\end{equation}
In learning to rank, typically, $h$ is parametrized using, for example, a linear regressor or neural network that depends on the features of $d$. The resulting ranking sampling is often referred to as a stochastic ``ranking policy'' $\pi_{\query}$. In the infinite-data regime, we can take $h$ to be simply a function of the relevance and group membership, $h(d)=h(\rel(d),\group(d))$.

\section{Scorability} \label{sec:scorability}

The central question of this work is: to strike good trade-offs between utility \eqref{eq:utility},\eqref{eq:expected-utility} and unfairness \eqref{eq:unfairness},\eqref{eq:expected-unfairness}, i.e., achieve a low value of the trade-off at a desired level $\alpha$ \eqref{eq:trade-off}, is it sufficient to rely on scoring functions?

Our expectations from a scoring function, however, are often implicit. We posit that the popularity of scoring functions suggests the following plausible expectation:

\fbox{\parbox{0.97\linewidth}{Given enough data, we should be able to learn a scoring function that achieves the best possible utility-fairness trade-off.}}

To capture the spirit of this informal statement without being distracted by the details of the learning procedure, the following definition gives the scoring function infinite data by allowing it to depend on the query-generating distribution.

\begin{definition}[Scorablility] \label{def:weakly-scorable}
    A pair of utility and unfairness functions $(\util,\unfair)$ is called deterministic (or, respectively, randomized) \emph{$\alpha$-scorable} if $\forall$ problem instances $(\docset, \rel, \group)$ and distributions $\D$ on $\docset$, $\exists$ a deterministic (or, respectively, randomized) $\score$ such that if the documents in $\query$ are sampled $\sim_{\mathrm{i.i.d.}} \D$, then ranking them by score using $\sigma_{\score} = \mathsf{sort}_\downarrow(\query, \score)$ achieves
    \begin{equation} \label{eq:weak-optimality}
        \Ex_\query\left[\tradeoff_\alpha\left(\sigma_{\score}, \query\right) \right] =  \max_{\sigma_\query} ~\Ex_\query\left[\tradeoff_\alpha\left(\sigma_Q, \query\right) \right]
    \end{equation}
    where the maximization is over deterministic (or, respectively, randomized) rankings. Equivalently, scorability means that:
    \[
        \sigma_\score \in \argmax_{\sigma_\query} ~\Ex_\query\left[\tradeoff_\alpha\left(\sigma_Q, \query\right) \right].
    \]
\end{definition}
\vspace{-0.5em}
It is important to note that the maximization in Eq. \eqref{eq:weak-optimality} is over $\sigma_\query$, the space of rankings or equivalently (deterministic or random) mappings from $\query$ to permutations, and not over a single (deterministic or random) permutation.

By allowing $\score$ to depend on the problem instance and the distribution of queries, Definition \ref{def:weakly-scorable} abstracts the notion that the scoring function may be learned given sufficient data from this instance. The order of quantifiers ``$\forall$ $(\docset, \rel, \group)$ and $\D$, $\exists ~\score$ s.t. $\forall \query \sim \D$'' is extremely important for this abstraction to be meaningful. It can be understood as ``given enough data, we can learn/build a scoring function, such that when new queries arrive'' we can rank them well by scoring then sorting. This adheres to the typical learning-to-rank framework, where the scoring function acts on individual documents and is fixed after the learning process.


\section{Suboptimality of Scoring} \label{sec:suboptimality}
\subsection{Deterministic Case} \label{sec:deterministic-scorability}

We first consider the deterministic case and show that there are instances where scorability is not possible. In other words, in these instances, deterministic scoring does not achieve the best possible tradeoff.

\begin{theorem}[Counterexample to Deterministic Scorability] \label{thm:weak_scorable}
    Let $\mathcal U$ and $\mathcal V$ be given by Eqs. \eqref{eq:utility} and \eqref{eq:unfairness} with some monotonically non-increasing $w$ such that $w_i+w_{m-i+1}$ is not constant for $i=1,\cdots,m-1$. Then, for any $\alpha<1$, there exists a problem instance $(\docset, \rel, \group)$ and a distribution $\D$ on $\docset$, such that for all deterministic scoring functions $\score$, that can depend on the instance and the distribution, Eq. \eqref{eq:weak-optimality} fails. Therefore, this choice of $(\mathcal U, \mathcal V)$ is not scorable.
\end{theorem}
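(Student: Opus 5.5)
The plan is to build an instance in which utility is the same for every ranking, so only unfairness matters. I will then show that a single global order on document types cannot be optimal for every query composition that occurs with positive probability. I use two document types only: $A=(\rel=1,\group=a)$ and $B=(\rel=1,\group=b)$. Let $\docset$ contain $m$ distinct documents of each type and let $\D$ be uniform on $\docset$.

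\textbf{Step 1 (reduction to per-query optimality).} The maximization in Eq.~\eqref{eq:weak-optimality} is over maps $\query\mapsto\sigma_\query$, so it decouples across queries. The right-hand side is therefore attained by choosing, for each $\query$, a ranking that maximizes $\tradeoff_\alpha(\cdot,\query)$. Every composition with $k$ documents of type $A$ and $m-k$ of type $B$, for $k=0,\dots,m$, has positive probability. Hence equality in Eq.~\eqref{eq:weak-optimality} requires $\sigma_\score$ to be per-query optimal on every such composition.

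\textbf{Step 2 (what scoring can do).} All relevances equal $1$, so $\util$ equals $\sum_i w_i$ for every ranking, and since $\alpha<1$, maximizing $\tradeoff_\alpha$ amounts to minimizing $\unfair$. A deterministic $\score$ gives $A$ and $B$ distinct values, say $\score(A)>\score(B)$ without loss of generality. Then $\sigma_\score$ ranks all type-$A$ documents first, in one block, for every $k$; the order within a block does not affect $\unfair$. Write $W_j=\sum_{i\le j}w_i$ and $W=W_m$. The two block orders have unfairness
\[
F_a(k)=\left|\frac{W_k}{k}-\frac{W-W_k}{m-k}\right| \quad\text{($A$ first)},\qquad F_b(k)=\left|\frac{W-W_{m-k}}{k}-\frac{W_{m-k}}{m-k}\right|\quad\text{($B$ first)}.
\]
Swapping the roles of $a$ and $b$ shows that $F_a(m-k)=F_b(k)$.

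\textbf{Step 3 (the main obstacle: forcing a flip).} I need some $k\in\{1,\dots,m-1\}$ with $F_a(k)\neq F_b(k)$.
\begin{itemize}
\item[] If $F_b(k)<F_a(k)$, the optimal ranking for composition $k$ is not $A$-first, so $\score(A)>\score(B)$ fails there.
\item[] By the symmetry in Step 2, $F_a(m-k)=F_b(k)<F_a(k)=F_b(m-k)$, so on composition $m-k$ the $B$-first block is strictly worse than the $A$-first block, and $\score(B)>\score(A)$ fails there.
\item[] The case $F_a(k)<F_b(k)$ is the same with $k$ and $m-k$ exchanged.
\end{itemize}
Either way, no assignment of the two scores is optimal on both compositions, and Eq.~\eqref{eq:weak-optimality} fails.

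It remains to link $F_a\neq F_b$ to the hypothesis. Inside the absolute values, the $A$-first expression involves top-prefix averages of $w$ and the $B$-first expression involves bottom-suffix averages. Monotonicity of $w$ fixes the sign of each, which removes the absolute values. I then expect $F_a(k)=F_b(k)$ for all $k$ to reduce to equalities of the form $W_k+(W-W_{m-k})=\tfrac{2k}{m}W$ for every $k$. Differencing consecutive $k$ gives $w_k+w_{m-k+1}=2W/m$, i.e.\ $w_i+w_{m-i+1}$ constant, which contradicts the assumption.

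The sanity check $m=3$, $w=(1,1,0)$ behaves as expected. For $k=1$, $A$-first is optimal with unfairness $0.5$, while $B$-first has unfairness $1$. For $k=2$, $B$-first gives $0.5$ and $A$-first gives $1$. This confirms the mechanism.

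If the exact sign bookkeeping under the absolute values fails for some $w$, my fallback is to compare each block order directly against a non-block ranking. One candidate is a single adjacent swap at the boundary between the two blocks. Another is placing $A$-documents at mirrored positions $i$ and $m-i+1$. The index $i$ where $w_i+w_{m-i+1}$ deviates from its average would supply the strict improvement.
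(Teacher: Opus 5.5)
Your proposal is correct and follows essentially the same route as the paper. It uses the same two-type, equal-relevance instance, and your $F_a(k)$ and $F_b(k)$ are exactly the paper's $\phi(k)$ and $\phi(m-k)$; the $a$/$b$ symmetry then forces whichever type is scored higher to fail on a positive-probability composition. Your differencing of $W_k+(W-W_{m-k})=\tfrac{2k}{m}W$ into $w_k+w_{m-k+1}=2W/m$ is a cleaner, fully explicit version of the paper's linear-independence step, so the sign bookkeeping goes through and the fallback you mention is not needed.
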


The detailed proof of Theorem \ref{thm:weak_scorable} can be found in Appendix~\ref{app:proofs}, but we offer here the intution behind it.
To show that the utility and fairness pair is not scorable, we construct a problem instance with only one document type (same relevance) from each group. This means that only two scoring functions are possible: rank group $G_a$ first or group $G_b$ first. We show that no matter which is chosen, we can design an adversarial query $Q$ to contradict the optimality of that scoring function. That is, no matter how a scoring function is chosen, there exists some query $Q$ where it will fail to reach the Pareto frontier. \textbf{The reason we can design such a $Q$ is the fact that, whereas the score depends on individual documents only, fairness depends on the makeup of the whole query.} Designing $Q$ like this is not enough, however, because the definition of scorability allows the choice of score to depend on the distribution over $Q$. So, if we deterministically chose one of our adversarial queries, one could simply choose the other scoring function. The solution is to choose a distribution over queries that guarantees {\em both} adversarial queries are chosen with positive probability.

In Appendix~\ref{app:proofs}, we give a corollary to Theorem \ref{thm:weak_scorable} which explicitly quantifies the optimality gap under an additional (convexity) condition on the weights $w$. We also illustrate an example instance of this scenario, with the position bias weights $w_i=1/(\log_2 i + 1)$, where we evaluate the theoretical bound on the gap as well as calculate the actual gap via Monte Carlo simulations. The gap persists when the two types of documents do not have the same relevance, and its general behavior (e.g., being much smaller for large $\alpha$) parallels all of our other experiments, including with real data. \looseness=-1
\vspace{-6pt}
\subsection{Randomized Case} \label{sec:randomized-scorability}
\vspace{-6pt}
Having identified that deterministic scorability fails for the family of utility and unfairness objectives considered, we now demonstrate that randomization does not necessarily help circumvent the trade-off gap. The proof relies on the fact that it is always possible to improve on randomized scoring using a randomized ranking that mimics it always, except in the cases highlighted by the counterexample construction in Theorem \ref{thm:weak_scorable}.
\begin{theorem}[Counterexamples to Randomized Strong/Weak Scorability] \label{thm:rand_scorable}
    Let $\mathcal U$ and $\mathcal V$ be given by Eqs. \eqref{eq:expected-utility} and \eqref{eq:expected-unfairness} with some monotonically non-increasing $w$ such that $w_i+w_{m-i+1}$ is not constant for $i=1,\cdots,m-1$. Then, for any $\alpha<1$,
        there exists a problem instance $(\docset, \rel, \group)$ and a distribution $\D$ on $\docset$, such that for all randomized scoring functions $\score$, that can depend on the instance and the distribution, Eq. \eqref{eq:weak-optimality} fails. Therefore, this choice of $(\mathcal U, \mathcal V)$ is not scorable.        
\end{theorem}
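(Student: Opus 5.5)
We reuse the instance behind Theorem \ref{thm:weak_scorable}: one document type per group, $a$ and $b$, with common relevance $r>0$. We take a distribution $\D$ under which, with positive probability, $\query$ realizes each of the two adversarial compositions $Q_1$ and $Q_2$ used in that proof. The non-constancy of $w_i+w_{m-i+1}$ guarantees that in these queries one of the two orders, ``all $a$ first'' or ``all $b$ first'', is strictly better than the other for $\tradeoff_\alpha$, and the better order differs between $Q_1$ and $Q_2$.

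The key structural observation concerns what a randomized scoring function can do on this instance. Since $f_d$ depends only on the type, there are only two score distributions, $f_a$ and $f_b$. Given a query, the induced random permutation is therefore a fixed (query-independent) exchangeable rule applied to the group labels. Conditional on the multiset of labels, every arrangement of labels is a lawful randomized ranking, and the scoring ranking is some mixture over such arrangements.

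The argument will run as follows.
\begin{enumerate}
\item We show that for a fixed query $\query$, the conditional objective $\tradeoff_\alpha(\sigma,\query)$ of Eqs. \eqref{eq:expected-utility}–\eqref{eq:expected-unfairness} is linear in the distribution over permutations, because the expectation sits outside the absolute value. Hence the optimal randomized ranking for $\query$ can be taken to be a point mass on an optimal deterministic permutation, and $\max_{\sigma_\query}\Ex_\query[\tradeoff_\alpha]=\Ex_\query[\max_\sigma \tradeoff_\alpha(\sigma,\query)]$. Note that utility is constant on this instance, since all relevances equal $r$, so only unfairness matters whenever $\alpha<1$.
\item We argue that optimality of $\sigma_\score$ forces it to place probability one on optimal permutations for every query of positive probability, in particular for both $Q_1$ and $Q_2$.
\item We then derive a contradiction. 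For $Q_1$, the only optimal label arrangement puts all $a$'s first, which forces $\P(\score_a>\score_b)=1$ for independent draws of one $a$-score and one $b$-score. For $Q_2$, the only optimal arrangement puts all $b$'s first, which forces $\P(\score_b>\score_a)=1$. These two requirements contradict each other.
\item Equivalently, we can phrase this as the paper's hint: build the randomized ranking that mimics $\sigma_\score$ everywhere except on $Q_1$ or $Q_2$, where it switches to the strictly better deterministic order. This ranking strictly improves the expected objective, because at least one of $Q_1,Q_2$ is not handled optimally and that query has positive probability.
\end{enumerate}

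We expect the main obstacle to be making sure the adversarial queries from the proof of Theorem \ref{thm:weak_scorable} have \emph{unique} optimizers among \emph{all} label arrangements, not merely among the two block orders. Otherwise a randomized score could mix between several optimal interleavings. If uniqueness fails, we would try first to pick $Q_1,Q_2$ as the extreme compositions, for example a single $a$ among $b$'s versus a single $b$ among $a$'s. In those compositions the unfairness depends only on the position $i$ of the singleton, through $w_i$, and the optimum can be pinned down directly. It then suffices to show that the optimal position sets on the two sides require incompatible pairwise orderings of $a$-scores versus $b$-scores. Concretely, one query forces $a$ to beat $b$ with positive probability bounded away from the other query's requirement, and a strict gap remains. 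Obtaining this strict gap is where the condition that $w_i+w_{m-i+1}$ is not constant enters.
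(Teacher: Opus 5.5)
Your step 3 fails because its premise is false, so this is more than a uniqueness check left to carry out. Take all relevances equal to $r$ and put the $p$ group-$a$ documents in positions $S$. The signed disparity is $r\bigl(\frac{m}{p(m-p)}\sum_{i\in S}w_i-\frac{1}{m-p}\sum_{i}w_i\bigr)$, which is increasing in $\sum_{i\in S}w_i$. The two block orders are therefore its two \emph{extremes}, $r\phi(p)$ and $-r\phi(m-p)$, and the arrangement minimizing its absolute value is generally an interleaving.

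In fact, for $m\ge 3$ (which the hypothesis forces), swapping the two documents at the boundary of ``$a$ first'' shifts the signed disparity by $-r(w_p-w_{p+1})\frac{m}{p(m-p)}$. Since $\phi(p)\ge w_p-w_{p+1}$ and $\frac{m}{p(m-p)}<2$, this swap never increases the unfairness, and it strictly decreases it when $w_p>w_{p+1}$; the same holds symmetrically for ``$b$ first''. So a block order is never the unique optimizer, and ``$\P(\score_a>\score_b)=1$'' is not forced. Step 4's claim that one of $Q_1,Q_2$ is mishandled then has no support.

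The singleton fallback, as written, does not close the gap. The phrase ``positive probability bounded away from the other query's requirement'' names no mechanism. Also, the block-order asymmetry need not survive at $p=1$: for $m=5$, $w=(2,2,\tfrac12,\tfrac12,0)$ satisfies the hypothesis, yet $\phi(1)=\phi(4)$.

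The missing idea is to argue through a strictly \emph{dominated} arrangement rather than through the optimum. This is the paper's proof:
\begin{itemize}
\item Fix $p$ with $\phi(p)>\phi(m-p)$, and let $X,Y$ be independent score draws for the two types.
\item Barring the degenerate all-ties case $X=Y$ a.s., one of $\P(X>Y)$, $\P(Y>X)$ is positive; say the first.
\item Since $\{X>Y\}\subseteq\bigcup_{q\in\mathbb{Q}}\{X>q>Y\}$, some rational $q$ has $\P(X>q)\,\P(Y<q)>0$.
\item In the query with $p$ documents of $G_a$ and $m-p$ of $G_b$, let $E$ be the event that every $a$-score exceeds every $b$-score. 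Then $\P(E)\ge\P(X>q)^p\,\P(Y<q)^{m-p}>0$.
\item On $E$ the unfairness is $r\phi(p)$. Consider the randomized ranking that copies $\sigma_{\score}$ except that, on $E$ and for this composition only, it puts the $b$ block first; it gets $r\phi(m-p)$ there.
\item This composition has positive probability under $\D$, as does its mirror (which handles the case $\P(Y>X)>0$). So the expected trade-off strictly improves.
\end{itemize}
In your step-3 language: optimality forces the dominated block order to have probability zero in both $Q_1$ and $Q_2$. That means $\P(X>Y)=\P(Y>X)=0$, a contradiction, and no uniqueness is needed.

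Your singleton route can also be completed, but by a mechanism you do not state. Given the singleton's score, the number of documents ranked above it is binomial. Hence the singleton avoids position $1$ (resp.\ $m$) almost surely only if it loses (resp.\ wins) a single pairwise comparison almost surely. Both singleton queries have unfairness $\frac{r}{m-1}\bigl|mw_i-\sum_j w_j\bigr|$ as a function of the singleton's position $i$. The hypothesis on $w$ prevents positions $1$ and $m$ from both being optimal. The two queries then impose incompatible conditions on $\P(X>Y)$ and $\P(Y>X)$.
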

\vspace{-0.5em}
In order to show that this is not just a concern with theoretical constructions, we next consider the widely used Plackett-Luce model and  construct a tractable counterexample instance where the Pareto frontier can be calculated, then numerically optimize the Plackett-Luce model (hence the randomized scoring) and demonstrate its suboptimal trade-offs. \looseness=-1

We let weights to be the standard position bias $w_i = 1 /(\log_2(i)+1)$. We choose $\D$ to have $k = 5$ documents, with the following composition $\left(\rel(d),\group(d)\right) = (1,a), (2,b), (3,a), \allowbreak (4,b), (5,a)$. We sample, i.i.d., queries $\query$ consisting of $m = 8$ documents. This is the same as the synthetic data set described in Section \ref{sec:evaluation}, which also contains details about how the PL model is solved. Primarily, to calculate gradients directly is intractable and one often resorts to the log-trick, as in the REINFORCE algorithm, However, a variance reduction technique proposed by \citet{grathwohlBackpropagationVoidOptimizing2018} has been demonstrated to work well for training PL models~\citep{gadetskyLowvarianceBlackboxGradient2019}, and is particularly reliable for this simple counterexample. 

\begin{figure*}[h] 
\centering
\begin{subfigure}[t]{0.4\textwidth}
    \includegraphics[width=\linewidth]{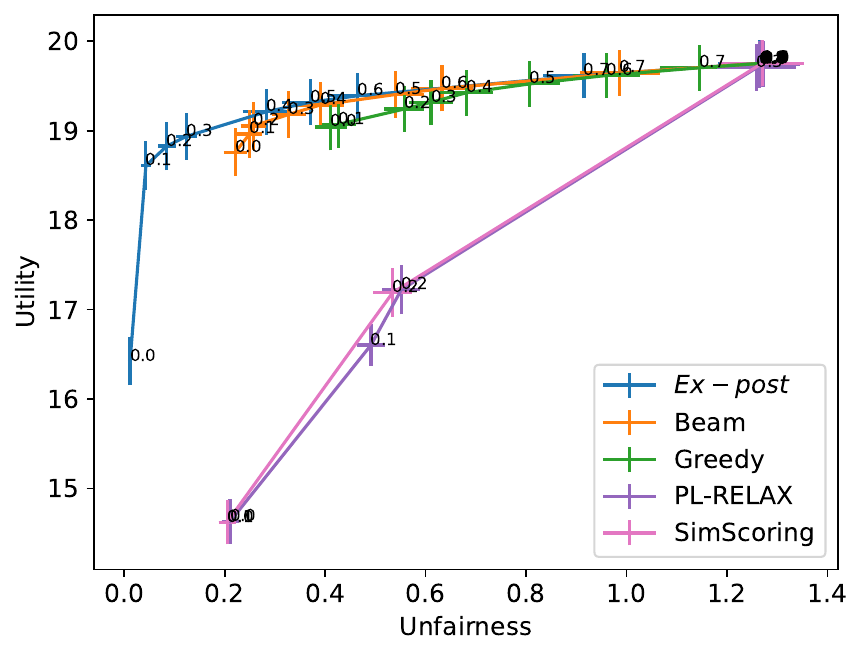}
    \caption{Synthetic Dataset}
    \label{fig:alt-scoring-synthetic}
\end{subfigure}
\hspace{3em} 
\begin{subfigure}[t]{0.4\textwidth}
    \includegraphics[width=\linewidth]{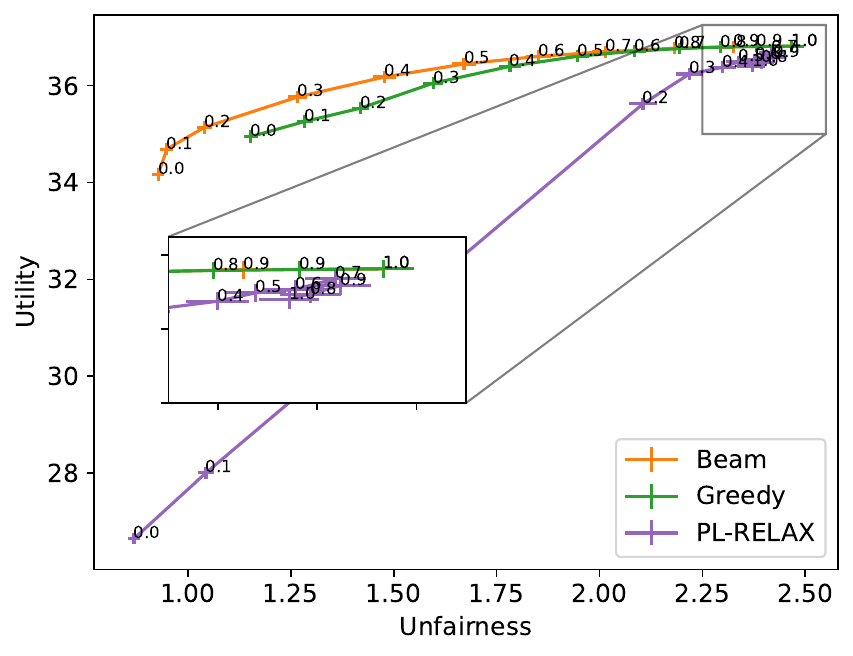}
    \caption{COMPAS Dataset}
    \label{fig:alt-scoring-compas}
\end{subfigure}
\caption{Utility and unfairness values for $\alpha = \{0, \ldots, 1\}$ achieved by beam search (orange), greedy (green) and the randomized scoring-based trained PL model (purple) on these datasets. Unlike the real-world COMPAS dataset, the synthetic data is tractable, and we also include the brute-force optimal \textit{ex-post} (blue) and best deterministic scorer (pink).}
\vspace{-12pt}
\end{figure*}


Further details of the experiments are included as part of Section \ref{sec:evaluation}, but we illustrate the resulting suboptimality in Figure \ref{fig:alt-scoring-synthetic} below. The top (navy) curve is the exact (exhaustive) ex-post optimal solution. We can see clearly that the PL model bottom (purple) curve is unable to achieve good trade-offs in comparison. We also include an exhaustive search of all (deterministic) scores (pink), which shows that the randomization of Plackett-Luce does not give it an advantage in this case, as their tradeoff curves effectively overlaps (within simulation precision). In Section \ref{sec:alt-scoring}, we propose and explain the alternative approaches (green and orange) in the same figure.

\vspace{-6pt}
\subsection{Across-Query Fairness} \label{sec:across-query}
\vspace{-6pt}
The notion of unfairness given by Eq. \eqref{eq:unfairness} is per-query. Namely, we measure the discrepancy between groups at the level of individual queries. While this is reasonable in many applications, e.g., suggesting male and female applicants in response to a single job posting, there are other instances where fairness is better measured across queries, e.g., whether certain movie categories are reaching a wide enough audience. In Appendix \ref{app:across-query}, we demonstrate that even in that setting, scoring can be suboptimal.

\section{Alternatives to Scoring} \label{sec:alt-scoring}

Having seen the theoretical shortcomings of scoring in various contexts, be it deterministic, randomized, per query $D$ or across queries drawn from a distribution $\D$, we now turn our attention to potential alternatives to simultaneously highlight that these shortcomings are present even in practice and, possibly, to overcome them. In particular, we explore and propose ex-post methods that can achieve better trade-off values than scoring functions. 

In Section \ref{sec:evaluation}, when tractable, we compare these approximate ex-post approaches to exact brute-force post-processing, which evaluates all the permutations for a query $D$ and finds the rankings that achieve the best trade-offs. This motivates us to think of the algorithms presented here as proxies to these Pareto optimal rankings when they are intractable, thus providing a clear backdrop to highlight the limitations of scoring. Thus, while we invite more attention to the richness of post-processing and are excited about some of the novelties in our methods, such as the use of beam search in fair ranking, our emphasis remains on understanding scoring.

In what follows, we assume that we know the true relevance of the documents or that we can learn them. 
\vspace{-12pt}
\subsection{Greedy Approach} \label{sec:alt-greedy}
\vspace{-6pt}
We first start with an ex-post approach that is a representative of the methods covered in Section \ref{sec:related-work}, particularly \citep{ovaisiFairnessInteractionRanking2024}, which is itself inspired by~\citep{geyik2019fairness}. The approach also has elements in common with~\citep{gao2020toward}. We emphasize that this is only meant to \emph{represent} these methods, not to \emph{compete} with them, similarly to how the Plackett-Luce model is meant to represent, and not to compete with, scoring approaches such as Fair-PG Rank~\citep{singhPolicyLearningFairness}.

In the first step in this greedy approach, we split the documents in $D$ by groups, $\{a,b\}$, and within each group, sort the documents in decreasing order of their relevance.
The group-wise sorted document sets let the greedy approach efficiently decide on the most suitable document for a better utility-fairness trade-off by considering only the top remaining document from each group.

Next, we initialize an empty list of ranked documents. For each position, $i$, we try adding the top document from each group, $\{a,b\}$, and calculate the trade-off with the desired $\alpha$. The document from the group that achieves the maximum trade-off value is placed at position $i$ and removed from the group-wise sorted document sets. The trade-off value at any position $i$ is computed for the previously selected documents and the one considered at position $i$, given as:
\begin{align}
    \sigma_{ D}(i) = \arg\max_{d \in \{d^{*}_a,d^{*}_b\}} \tradeoff\left(
            \sigma(\Barg{0, \dots, i-1}) \cup \{d\}
        \right) 
\end{align}
where $d^{*}_\group$ is the top (highest relevance) document from group $\group$, and $\tradeoff$ represents the (truncated) trade-off function. We repeat this process for all positions, and the final result is a ranked list of all documents.

Without the unfairness term in the trade-off, this greedy approach would exactly optimize utility. Indeed, for higher $\alpha$ values, where utility is given more weight in the trade-off, we observe that this performs well. However, since unfairness of a ranking depends on the relative positions of other documents, the one-step decision in the greedy approach is too myopic and can get stuck in local minima. This is why performance is expected to deteriorate with lower $\alpha$ values.
\vspace{-6pt}
\subsection{Beam Search} \label{sec:beam}
\vspace{-6pt}
To improve on greed, we propose a novel contribution: to adapt ideas from beam search~\citep{lowerreHARPYSPEECHRECOGNITION1976}, commonly used in NLP procedures~\cite{freitagBeamSearchStrategies2017} into the previous method. Beam search is similar to the greedy approach, but relaxes the greed by considering more than one possible realization. The algorithm can keep multiple lists of exploration (a beam). The number of lists is denoted by $B$, called the beam width, and is an input parameter of the algorithm. While fairly standard, the complexity here comes from the fact that each list in the beam is like a parallel universe, where we need to book-keep everything (which documents have been ranked and the remaining documents in each group, sorted in order of decreasing relevance.) At each stage, we expand the beam by considering the next candidate from each group, then we prune the beam by evaluating all the tradeoffs achieved and keeping only the top-$B$. We repeat this process until all the remaining documents in the set have been ranked. In the end, the ranking with the best trade-off among the last $B$ rankings is returned. We describe a concrete implementation of this beam search for ranking trade-offs in Algorithm~\ref{alg:beam-search}. A detailed mathematical description and a schematic illustration of this algorithm can be found in Appendix \ref{app:beam-details}. Both beam search and greedy are extremely efficient. They both need $\bigO(m \log m)$ time to initially sort. Greedy then needs time $\bigO(m)$ to traverse, while beam search in addition needs $\bigO(m \ B \log B)$ to traverse while expanding and pruning the beams.
\vspace{-6pt}
\section{Evaluation} \label{sec:evaluation}
\vspace{-6pt}
We evaluate both scoring via listwise Plackett-Luce ranking and our ex-post approaches (greedy and beam search) on one synthetic dataset and two real-world datasets. We show that, across all ranges of $\alpha$, our approaches achieve considerably better trade-offs than scoring. In the case of synthetic data, the results are also compared to the optimal trade-off, which ex-post approaches but scoring does not.
\vspace{-6pt}
\subsection{Data} \label{sec:eval_data}
\vspace{-6pt}
Our evaluation is done on four datasets---a synthetic dataset, COMPAS dataset~\cite{mattuMachineBias}, and the German Credit dataset~\cite{statlog_(german_credit_data)_144}, and the MovieLens 100K dataset (see Appendix~\ref{sec:other-datasets} for the last two). To construct the synthetic dataset, we select $k=5$ unique documents with relevances $\{1,...,5\}$ and alternating group membership $\{a, b, a, b, \cdots\}$. For each query $D$ on the $k$ documents, we sample $m=8$ random rankings. We sample 50,000 such queries to constitute a synthetic training set. The small values of $k$ and $m$ are intended to make the exact exhaustive ex-post optimal solution tractable in this case.


\setlength{\textfloatsep}{0pt}
\setlength{\floatsep}{0pt}
\begin{algorithm}[ht!]
\small
\SetAlgoSkip{0pt}
\caption{Beam Trade-off Search }
\label{alg:beam-search}
\KwIn{
Problem instance ($D$, $\rel$, $\group$),
Beam width $B$, Size of query $m=|D|$}
\KwOut{Trade-offs for varying $\alpha$}
\ForEach{$\alpha \in [0, 1]$ with step size $0.1$}{
    $\tradeoff_\alpha = \alpha \cdot \util(\sigma, D) - (1-\alpha) \cdot \unfair(\sigma, D)$\;
    
    Sort $D$ based on $\rel$ in descending order\;
    Group sorted $D$ into dictionary \snippet{desc\_group\_dict} based on $\group$\;
    Initialize the beam with top-ranked documents from each group\;
    Set \snippet{beam\_candidates} to the $B$ best initial choices based on their trade-offs\;
    \For{each position $i = 1, \dots, m$}{
        Initialize empty sets for \snippet{trade\_off\_list} and \snippet{seek\_indices}\;
        \For{each beam candidate $w \in $ \snippet{beam\_candidates}}{
            Collect remaining unprocessed documents for each group from \snippet{desc\_group\_idx\_dict}\;
            Get the top documents from each group sorted by relevance\;
            Combine most relevant documents into a single list\;
            \For{each document $d \in $ \snippet{remaining}}{
                Compute trade-off for appending $d$ to beam candidate $w$\;
                Append the trade-off and the updated beam to \snippet{trade\_off\_list} and \snippet{seek\_indices}\;
            }
        }
        Select the top $B$ beams with the lowest trade-off values\;
        Update \snippet{beam\_candidates} accordingly\;
    }
    Select the final beam in \snippet{beam\_candidates} as the candidate with the lowest trade-off\;
    Compute final utility and unfairness for the selected beam\;
}
\end{algorithm}

We  adapt the COMPAS dataset for a ranking task (as done in~\cite{singhPolicyLearningFairness}) by following a similar process to the synthetic dataset. This dataset contains 4,534 rows with 13 attributes that are used to predict if a defendant is likely to recidivate (commit a crime when release on parole). The attribute \texttt{violent\_decile\_score} assigns a score between 1 and 10 to each defendant, which we treat as the true relevance in this case. A query is created by sampling 10 documents from the dataset, where the ratio of recidivism likelihood (\texttt{two\_year\_recidivous}) is 4:1. We perform an 80-20 train-test split on the original documents and sample 10,000 queries from the train split and 2,000 from the test split. We treat the \texttt{race} attribute as protected, and use the two groups, \texttt{African-American} and \texttt{Caucasian}, for unfairness.  

\vspace{-6pt}
\subsection{Experiments}
\label{sec:experiments}
\vspace{-6pt}
We design the experiments to measure how trade-offs between utility and unfairness are navigated by different methods at different $\alpha$ values.

\textbf{\textit{Ex-Post.}} The \textit{ex-post} ranking iterates over all possible permutations given the documents in a query and finds the optimal given an $\alpha$ value. Naturally, this is an expensive operation and takes $\bigO(n!)$ time. Therefore, we perform this evaluation only for the synthetic dataset with a smaller query size of $m=8$. Using this baseline, we get the best possible ranking for each query, which allows us to contextualize the performance of the other approaches.

\textbf{Greedy and Beam Search.} These proposed alternative approaches are elaborated in Section \ref{sec:alt-scoring} and in Algorithm \ref{alg:beam-search}. They are approximate ex-post solutions that require knowledge of relevances, either exact or learned.

\textbf{PL Model.} We use PL-RELAX, the low-variance gradient estimator for the Plackett-Luce distribution, proposed by \citet{gadetskyLowvarianceBlackboxGradient2019}, to train on our datasets. The proposed estimator is unbiased and offers low variance compared to the other gradient estimators, such as REINFORCE. The PL model is a listwise ranking algorithm that learns a real-number preference over the documents and samples a ranking that optimizes a loss function. In our case, the loss function is the trade-off between utility and unfairness, Eq.~\eqref{eq:trade-off}.

\textbf{SimScoring.} For the synthetic dataset, we also simulate the best possible scoring methods by evaluating all possible (deterministic) scorings. This is equivalent to looking at all possible permutations of the document types, which is fairly tractable at this small scale. \looseness-1

\vspace{-6pt}
\subsection{Results}
\vspace{-6pt}
\textbf{Synthetic Dataset.} We plot the results for the \textit{ex-post}, beam search, greedy, SimScoring, and the PL-model on the synthetic dataset in Fig.~\ref{fig:alt-scoring-synthetic}. Each line corresponds to the trade-offs offered by one of the approaches for $\alpha \in \{0, \ldots, 1\}$. 
The overlap between SimScoring and the PL-RELAX model provides evidence of optimal convergence of the PL-model.
The greedy approach performs better than the PL model with gradient estimates (shown in purple, labeled \texttt{PL-RELAX}) and for $\alpha > 0.5$ gets close to the Pareto frontier achieved by the \textit{ex-post} approach. One reason that explains the better performance of the greedy approach at higher $\alpha$ values is as follows. When we prefer utility over fairness, since our choice of documents across groups is sorted in a decreasing order by relevance, the greedy approach gets close to finding optimal rankings without much exploration.

Compared to the greedy approach, the beam search performs closer to the \textit{ex-post} trade-off values. 
Especially at lower $\alpha$ values, it can achieve lower unfairness while having a comparable utility to the greedy approach. Since the beam search evaluates trade-off values for more documents for ranking, it ``explores'' more permutations which can avoid the myopic choices of only placing the highest relevant document at the top position. 
 Naturally, this exploration depends on the beam width $B$, and in our evaluation, we see the best results using $B=4$.
Beam search does not completely avoid choosing suboptimal documents at intermediate steps when unfairness dominates utility (lower $\alpha$ values), and finding the perfect fairness for ranking remains a combinatorial problem~\cite{bektasUsingLpnormsFairness2020}. \looseness=-1

\textbf{COMPAS Dataset.}
Unlike the synthetic dataset, where we know the true relevance of each document, in real-world scenarios, we are not given access to the true relevances at test time. Therefore, we need to learn the relevances from the dataset and use them for the beam search and greedy approaches. 
To do so, we train a Gradient Boosting Regressor to predict the \texttt{violent\_decile\_score} from the document features as $\rel$. The regression model achieves an $R^{2}=0.79$ and we use these scores for beam search and greedy approaches.
For the PL-RELAX approach, we learn a linear model to map features to the Plackett-Luce coefficients. This gives PL the same representation power as the regression model used to learn relevances for greedy and beam search.

Fig.~\ref{fig:alt-scoring-compas} shows the (test) trade-off curves traced by the beam search, greedy, and PL-RELAX. 
Due to the larger size of the query in this case ($m$=10) compared to the synthetic dataset ($m$=8), we do not calculate the \textit{ex-post} trade-offs.
The beam search manages to achieve high utility and low unfairness at $\alpha < 0.2$ and remains weakly-dominant at $\alpha \geq 0.2$ values.
The greedy approach gets closer to the beam search trade-off values as $\alpha$ gets larger, but it maintains a gap with beam search until the highest $\alpha$ values.
At $\alpha > 0.5$ values, the trade-off values for the beam search and greedy approaches become similar. The PL model, on the other hand, does not achieve as significant of a trade-off as the other two approaches for all $\alpha$ values. It sacrifices utility for lower unfairness at $\alpha < 0.2$, but is weakly-dominated by the greedy and beam search approaches at $\alpha \geq 0.2$ and approaches the beam search and greedy trade-off values at $\alpha = 1$. \looseness=-1




We emphasize that these experiments use the greedy and beam approaches as representatives of ex-post techniques from the literature and the Plackett-Luce model as a representative of scoring-based techniques, rather than competitors of these techniques. The key takeaway is that, even in practice, scoring-based ranking struggles to achieve the same quality of trade-offs that ex-post approaches can, giving empirical evidence to the issue highlighted by the theory. 




\vspace{-6pt}
\section{Conclusion and Limitations}
\vspace{-6pt}
In this paper, we looked closely at the problem of learning to rank in the presence of a traditional utility objective as well as a secondary objective, namely fairness. We challenged the common paradigm of \textbf{learning a score} per query-document pair, by giving several counterexamples that show that this \textbf{can fall short of achieving good trade-offs} between the two objectives. The key insight is that, despite being an in-processing technique, scoring is limited by the fact that \textbf{fairness is non-decomposable in contrast to utility}. We showed that our approximate ex-post solutions, in the form of \textbf{greedy and beam-search, are tractable alternatives with much better trade-offs}. They both, however, rely on learning relevances directly. This opens the question of, if other properties are learned during in-processing instead of or in addition to relevances, whether they could promote even better trade-offs.

On the limitations front, it is worth noting that, instead of interaction disparity, there are some fairness notions that are presented as direct properties of the score function, designed to encourage certain ranking behaviors. For example, \citep{zhu2020measuring} makes the score weakly predictive of the group membership, in order to encourage being at chance when guessing the group of documents placed at higher positions via scoring. This does not map to an interaction disparity as covered in this paper, e.g., it does not account for relevances. The conclusions of this paper do not apply to such notions, and it is entirely plausible that when fairness is a property of the score, it is sufficient to perform in-processing to achieve optimal utility-fairness tradeoffs. In terms of other limitations, while the framework we present readily handles the commonly used DCG as utility (the counterexample constructions would also work with nDCG), there are some utilities such as MAP (mean average precision) that do not fit in the framework. That said, since the shortcomings of scoring are due to the fairness function, we expect tradeoffs involving utilities such as MAP to also have optimality gaps under scoring.

\bibliographystyle{unsrtnat}
\bibliography{ref}  

\balance
\newpage
\onecolumn

\title{Scoring is Not Enough: Addressing Gaps in\\Utility-Fairness Trade-offs for Ranking\\(Supplementary Material)}

\maketitle

\appendix

\section{Related Work}
\label{sec:related-work}

\subsection{Scoring Functions}
\label{sec:rw-scoring-funcs}
Scoring is one of the most traditional approaches to ranking documents. It works by assigning a score to individual documents and then using the score to sort them~\cite{agarwalAddressingTrustBias2019}. A scoring function, often learned, produces the score using the document features and a query. Typically, the scores are expected to help rank documents in such a way that the most relevant ones are shown first~\citep{joachimsOptimizingSearchEngines2002,burgesLearningRankNonsmooth2006}. 
The scoring functions are evaluated using a utility objective, which is defined over cumulative relevances of ranked documents for a query, weighted by the placement or position. Prior works have explored learning methods using three different types of utilities: $(i)$ pointwise --- scores are learned by comparing against labeled relevances of individual documents~\citep{geyInferringProbabilityRelevance1994}; $(ii)$ pairwise --- scores are learned based on relative relevances for document pairs~\citep{burgesLearningRankUsing2005,beutelFairnessRecommendationRanking2019}; $(iii)$ listwise --- scores are learned based on the complete list of documents presented~\citep{caoLearningRankPairwise2007,burgesLearningRankNonsmooth2006}. 
Recent work focuses more on the listwise utility functions, since they allow direct optimization over the complete ranked documents and are easily compatible with desirable end-to-end performance measures~\citep{caoLearningRankPairwise2007}. Despite the utility being listwise, the scoring is performed document-by-document. Thus, the listwise utility imparts valuable statistical information about other documents in a typical query to the score --- but no direct information about other documents in a query is available to the scoring function. One of the common listwise techniques to train learning-to-rank (LTR) methods uses the Plackett-Luce (PL) model to output a ranking based on the probability of selecting the next document conditioned on observed documents. The PL model uses probabilistic sampling instead of a non-differentiable sorting function, making the model differentiable and suitable for gradient-based training methods. However, in practice, calculating gradients for the PL model requires sampling all possible permutations given a document set, making the problem computationally intractable~\citep{schulmanGradientEstimationUsing2015}.
To solve this, researchers have proposed several techniques to efficiently estimate unbiased gradients for the PL model~\citep{oosterhuisComputationallyEfficientOptimization2021,maLearningtoRankPartitionedPreference2021,huijbenReviewGumbelmaxTrick2022,koolEstimatingGradientsDiscrete2020}.
In particular, we use the low-variance gradient estimation technique proposed by \citeauthor{gadetskyLowvarianceBlackboxGradient2019} to compute PL model gradients and optimize for utility-fairness trade-off~\citep{gadetskyLowvarianceBlackboxGradient2019}. This is inspired by reinforcement learning and is in line with many recent approaches for learning to score~\citep{singhPolicyLearningFairness,ge2022toward}.

\subsection{Fairness in Ranking}
As ranking systems are increasingly adopted within sociotechnical systems, fairness considerations have become a crucial requirement. Despite scoring functions having originated from utility-based objectives, significant research has been centered around developing fairness-focused scoring functions that are used for ranking~\citep{singhFairnessExposureRankings2018,asudehDesigningFairRanking2019,singhPolicyLearningFairness,memarrastFairnessRobustLearning2021}.
It is worth describing the landscape of fairness, to place the current work more clearly within context. In addition to the subtle variations in the definitions of fairness, there are distinct granularities, targets, and scopes. In terms of granularity, and a distinction that permeates the general algorithmic fairness literature, there is a contrast between individual notions of fairness~\citep{biegaEquityAttentionAmortizing2018,saitoFairRankingFair2022,bowerIndividuallyFairRankings2020} and group notions~\citep{yangMeasuringFairnessRanked2017,singhPolicyLearningFairness,singhFairnessExposureRankings2018}. In the present work, we focus on binary groups. In terms of targets, since there are two sides in the ranking scenario (query/user, document/item), fairness considerations could be toward one, the other, or both. We focus on document-group fairness. Some of the cited work also considers fairness toward queries/users instead and more recent work also attempts to be jointly fair toward both queries/users and documents~\citep{greenwood2024user}. Document (un)fairness is often measured in terms of interaction disparity, and this can be measured for each query~\citep{gorantlaOptimizingGroupFairPlackettLuce2023} or by accumulating interaction across multiple queries~\citep{memarrastFairnessRobustLearning2021,ovaisiFairnessInteractionRanking2024}. We primarily focus on the former, though we also give a brief treatment of the latter. For interested readers, an extensive review by \citeauthor{zehlikeFairnessRankingPart2022} elaborates on the recent advances in fair ranking methods, datasets used and their applications~\citep{zehlikeFairnessRankingPart2022, zehlikeFairnessRankingPart2022a}. 

\subsection{Post-Processing}
Scoring is not the only way that fairness in ranking has been tackled. Though not as widespread, post-processing --- also known as ex-post methods or re-ranking --- attempts to account for fairness at inference-time. Some of the work in this area does not address trade-offs, but rather attempts to achieve a fixed notion of fairness via post-processing~\citep{geyik2019fairness,gorantlaOptimizingGroupFairPlackettLuce2023}, e.g., based on alternations between groups of documents. One important concept from~\citep{geyik2019fairness} is the use of greed to address the intractability of post-processing. A novel concept in~\citep{gorantlaOptimizingGroupFairPlackettLuce2023} is that scores are learned in a way that is aware of post-processing. Two papers that do address utility-fairness trade-off are~\citep{gao2020toward} and~\citep{ovaisiFairnessInteractionRanking2024}. They both also use greed to develop tractable heuristics. The former uses an entropy-based diversity notion of fairness that does not fit well into the present framework of fairness. The latter does fit, but does not bring to light the fact that scoring is otherwise suboptimal. The post-processing approaches of Section \ref{sec:evaluation} are offered as representatives of, rather than competitors to, these methods.

In summary, we adopt an interaction-disparity notion of fairness over document-groups, and study the trade-offs between utility and such fairness, comparing scoring and non-scoring based ranking methods. Our focus is on the scope of individual queries, but we also briefly address the case when the scope is across queries. What has gone under the radar in prior work, and what we highlight at present, is the inadequacy of scoring in terms of achieving the best possible trade-offs. 


\subsection{Popularity of Scoring in Practice}

The popularity of scoring-based methods in research stems directly from their extensive application across various domains, as scoring is an inherent mechanism in many real-world ranking systems. Although our present contribution is theoretical, the widespread use of scoring in practice, including technology, social sciences, and healthcare, underscores its real-world relevance.

\begin{enumerate}
    \item \textbf{Technology.}
    \begin{enumerate}
        \item LinkedIn. The LinkedIn Talent Search platform, used by recruiters to find eligible candidates, relies on score-based ranking to order results~\cite{behzadExternalFairnessEvaluation2025,ha-thucQueryByKeywordQueryByExampleLinkedIn2017}.
        \item Airbnb. This online accommodation marketplace uses list ranking based on scoring methods to personalize property suggestions for users~\cite{grbovicRealtimePersonalizationUsing2018a}.
    \end{enumerate}
    \item \textbf{Housing.}  Scoring is a foundational component for allocating crucial resources, such as housing services for unhoused community members, as demonstrated by the Allegheny County Department of Human Services~\cite{chengAlgorithmAssistedDecisionMaking2024}. Critically, fairness issues within these scoring methods have a heavy bearing on the reliability and funding opportunities for public agencies.
    \item \textbf{Healthcare.} US kidney exchange programs utilize a point-based ranking system to match and prioritize patients eligible for transfers, balancing key criteria such as medical efficiency and fairness~\cite{bertsimasFairnessEfficiencyFlexibility2013}.
\end{enumerate}

These are some of the many examples we list emphasizing the practical use of scoring-based methods for ranking, where fairness is an important consideration. This should demonstrate the importance of our findings and the urgency to discuss this with the wide community of researchers and practitioners.

\section{Proofs} \label{app:proofs}

\subsection{Proof, Corollary, and Example of Theorem \ref{thm:weak_scorable}}

To prove the theorem, we first build a stepping stone by showing an even more stringent notion of scorability is not possible to achieve. In this notion, one expects the score to rank \emph{every} $\query$ the best way possible.

\begin{definition}[Strong Scorability] \label{def:strongly-scorable}
    A pair of utility and unfairness functions $(\util,\unfair)$ is called deterministic (or, respectively, randomized) \emph{strongly $\alpha$-scorable} if $\forall$ problem instances $(\docset, \rel, \group)$, $\exists$ deterministic (or, respectively, randomized) $\score$ such that $\forall \query = \{d_1, d_2, ..., d_m\}$, ranking them by score using $\sigma_{\score} = \mathsf{sort}_\downarrow(\query, \score)$ achieves
    \begin{equation}  \label{eq:strong-optimality}
    \begin{multlined}
        \tradeoff_\alpha\left(\sigma_{\score}, \query\right) =  \max_{\sigma_\query} \alpha \cdot \util(\sigma, \query) - (1-\alpha) \unfair(\sigma, \query),           
    \end{multlined}
    \end{equation}
    where the maximization is over deterministic (or, respectively, randomized) rankings.
\end{definition}

Unlike Eq. \eqref{eq:weak-optimality}, the maximization variable in Eq. \eqref{eq:strong-optimality} does not have to explicitly depend on $\query$, since this is repeated for every $\query$.

Just as in Definition \ref{def:weakly-scorable}, by allowing $\score$ to depend on the problem instance, Definition \ref{def:strongly-scorable} abstracts the notion that the scoring function may be learned given sufficient data from this instance. Again, the order of quantifiers ``$\exists \score$ s.t. $\forall \query$'' is important for this abstraction to be meaningful. (This should not be confused with the reversed order ``$\forall \query$ $\exists \score$'', which can be trivially achieved by setting the scores to reflect the post-hoc ranking in Eq. \eqref{eq:strong-optimality}.)

What makes the requirement of Definition \ref{def:strongly-scorable} ``strong'' is that we ask for optimality of the trade-off for every query $\query$. In other words, this means that one can think of nature as an adversary who can, within the problem instance, choose the specific documents to present to the scorer. 

This stronger requirement gives us the first evidence that we should not take scoring for granted, through the following suboptimality of deterministic scoring in terms of strong scorability.

\begin{lemma}[Counterexample to Deterministic Strong Scorability] \label{thm:strong_scorable}
    Let $\mathcal U$ and $\mathcal V$ be given by Eqs. \eqref{eq:utility} and \eqref{eq:unfairness} with some monotonically non-increasing $w$ such that $w_i+w_{m-i+1}$ is not constant for $i=1,\cdots,m-1$. Then, for any $\alpha<1$, there exists a problem instance $(\docset, \rel, \group)$ such that $\forall$ deterministic scoring functions $\score$ that can depend on the instance, there exists $\query$ such that Eq. \eqref{eq:strong-optimality} fails. Therefore, this choice of $(\mathcal U, \mathcal V)$ is not strongly scorable.
\end{lemma}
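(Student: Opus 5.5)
Since $\alpha<1$, I would make utility irrelevant and reduce everything to a comparison of unfairness values. I would take a problem instance in which $\docset$ contains many documents of exactly two types, $(1,a)$ and $(1,b)$, so every document has relevance $1$. For any query $\query$ of size $m$, every ranking then has the same utility $\sum_i w_i$. Maximizing $\tradeoff_\alpha$ is therefore equivalent to minimizing $\unfair$.

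Because score depends only on type and distinct types get distinct scores, a deterministic $\score$ does only one of two things. Either it ranks every $a$-document above every $b$-document, or the reverse. So for a query with $k$ documents from $G_a$ and $m-k$ from $G_b$, the scorer can only output one of the two block rankings, $a^k b^{m-k}$ or $b^{m-k} a^k$. The lemma then follows if I exhibit one $k$ for which some other permutation has strictly smaller unfairness than the block ranking the scorer uses. The cleanest version is a single query where an interleaved ranking beats both block rankings simultaneously. Whichever of the two scores is chosen, that $\query$ then violates Eq. \eqref{eq:strong-optimality}.

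For the block rankings I would write the unfairness explicitly. For $a$ first it is
\[
\Bigl|\tfrac{1}{k}\textstyle\sum_{i\le k} w_i-\tfrac{1}{m-k}\sum_{i>k} w_i\Bigr|,
\]
and symmetrically for $b$ first. Both values are strictly positive whenever $w_1>w_m$, because the average of the top positions strictly exceeds the average of the bottom ones. The hypothesis that $w_i+w_{m-i+1}$ is not constant forces $w_1>w_m$.

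To build a better ranking I would use the palindromic pairing of positions $(i,m-i+1)$ with pair sums $s_i=w_i+w_{m-i+1}$. Take $m$ even and $k=m/2$ as the first attempt. Assigning one $a$ and one $b$ to each pair gives unfairness $\tfrac{2}{m}\bigl|\sum_i \pm (w_i-w_{m-i+1})\bigr|$, and a block ranking corresponds to all signs $+$. Flipping the sign of the pair with the smallest positive gap strictly lowers the absolute value, provided at least two pairs have positive gap. If instead exactly one pair has a positive gap, I would switch to an unbalanced composition, for example $k$ odd or $k=1$ or $k=2$. The idea is to place the $a$-documents on whole pairs, or on a pair plus the middle position, so that group $a$ receives an average of the form $s_j/2$. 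Because the $s_j$ are not all equal, a suitable choice of $j$ gives an average strictly closer to the group-$b$ average than either block ranking achieves. Odd $m$ would be handled the same way, using the middle position as a free slot.

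I expect the main obstacle to be this case analysis on $w$. The task is to pick one composition $k$, depending only on $w$, and verify strictly that a rearrangement beats both block orders. This is exactly where the non-constancy of $w_i+w_{m-i+1}$ must be used. When all $s_i$ are equal, the palindromic arrangements all tie with each other, and I expect the block orders can then be optimal. So the argument has to extract the strict improvement from a pair $j$ with $s_j\neq s_1$. I would first try to show directly that some query with $a$-documents occupying a subset of the pairs $\{j,m-j+1\}$ has strictly smaller disparity than both blocks. If that fails in some degenerate configuration, I would fall back to a short direct computation for that configuration, such as $w_1>w_2=\dots=w_m$.
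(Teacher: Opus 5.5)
\textbf{There is a genuine gap.} Your reduction to the two block rankings is right. So is your first statement of the goal: find a query on which the scorer's own block order is beaten. The problem is that you then commit to a strictly stronger target, namely a single composition $k$, chosen from $w$ alone, on which some rearrangement beats \emph{both} block orders. That target is unattainable for some weights satisfying the hypothesis.
\begin{itemize}
\item Take $m=3$, $w=(1,1,0)$, so $s_1=1\neq 2=s_2$. For $k=1$ the unfairness is $\tfrac12$ if the $a$-document sits in position $1$ or $2$, and $1$ if it sits in position $3$. So the block $abb$ is already optimal. The case $k=2$ is symmetric, with $baa$ optimal.
\item Take $m=4$, $w=(1,1,1,0)$. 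This has exactly one positive pair gap and non-constant $s$. For $k=1,3$ one block is optimal, and for $k=2$ every arrangement ties at $\tfrac12$. So your claim that putting $a$ on whole pairs gives an average strictly closer than either block is false here.
\item The same happens for $w=(1,0,\dots,0)$, where only the identity of the top document matters.
\end{itemize}
In these instances there is no interleaving to exploit. Your ``short direct computation'' fallback is therefore the actual mechanism of the proof, and you have not said what it is.

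The missing idea is the order of quantifiers in strong scorability: the query may be chosen \emph{after} the scorer. You only need some $p$ for which the two block orders have different unfairness, $\phi(p)>\phi(m-p)$, where $\phi(j)=\frac1j\sum_{i\le j}w_i-\frac1{m-j}\sum_{i>j}w_i$. If $\score(1,a)>\score(1,b)$, present $p$ documents of type $a$ and $m-p$ of type $b$; otherwise swap the roles. The reversed block then strictly beats the scorer. This is exactly where the hypothesis enters. One has $\phi(j)-\phi(m-j)=\frac1j\sum_{i\le j}s_i-\frac1{m-j}\sum_{i>j}s_i$. If this vanished for all $j$, every partial sum would satisfy $\sum_{i\le j}s_i=\frac jm\sum_i s_i$, forcing $s$ to be constant.

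Your balanced-pair argument for even $m$ with at least two positive gaps never uses non-constancy of $s$. It works even for linear weights such as $(3,2,1,0)$, contrary to your expectation that blocks are optimal when $s$ is constant: there $abba$ has unfairness $0$ versus $2$ for the blocks. So it proves a different, partly stronger statement: one scorer-independent bad query. It cannot cover the lemma's full hypothesis without the $\phi$-asymmetry argument.
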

\begin{proof}

    Consider the function \[\phi(j) = \frac{1}{j}\sum_{i=1}^j w_i - \frac{1}{m-j}\sum_{i=j+1}^m w_i,\] with $j=1,\cdots,m-1$. Observe that, thanks to the non-increasing nature of $w$, $\phi$ is non-negative. We claim that $\phi$ cannot be symmetrical, i.e., $\phi(j)=\phi(m-j)$ $\forall j$, under the given condition for $w$. To see this, note that:
    \[
        \phi(j) - \phi(m-j) = \frac{1}{j} \sum_{i=1}^j c_i -\frac{1}{m-j} \sum_{i=j+1}^m c_i,
    \]
    where $c_i=w_i+w_{m-i+1}$. If we enumerate these equations over all $j=1,\cdots,m-1$ and equate them to $0$, we get $m-1$ linearly independent equations, if we add to these $c_1=c$, we have $m$ equations that uniquely determine the values of $c_i$. Since $c_i=c$ $\forall i$ is a valid solution, it is the only one. (We can also solve this constructively, by induction.) Thus $\phi(j)=\phi(m-j)$ $\forall j$ implies that $w_i+w_{m-i+1}$ is constant. (For completion, note that the converse can also be verified to be true, thus these two conditions are equivalent.)

    Since $w_i+w_{m-i+1}$ is \emph{not} constant, it  means that there exists some $p \in \{1,\cdots,m-1\}$ such that $\phi(p) > \phi(m-p)$ (strictly). Using this, construct the problem instance as follows. Let $\docset$ consist of two types of documents, with $(\rel(d),\group(d))=$ either $(1,a)$ or $(1,b)$. Note that, by choosing both types to have the same relevance, the utility of all rankings will be the same, and the trade-off will only depend on the unfairness term.
    
    As previously discussed, we assume that $\score(d)$ is a function of $(\rel(d),\group(d))$ and that it cannot assign the same value to the two distinct types of documents. W.l.o.g., assume $\score(1,a) >\score(1,b)$. Construct $Q$ to have $p$ documents of type $(1,a)$ and $m-p$ of type $(1,b)$. This means that the score-induced ranking assigns the top $p$ spots to documents from $G_a$ and the remaining $m-p$ to $G_b$. Note that the unfairness of this ranking is equal to $\phi(p)$. This, by construction, is larger than $\phi(m-p)$, which is the unfairness we would have obtained had we placed the $G_b$ documents first and $G_a$ documents last. Therefore, for this choice of $Q$, Eq. \eqref{eq:strong-optimality} fails and it follows that strong scorability does not hold.
\end{proof}

The counterexample in the proof of this theorem explicitly constructs a hard query $\query$ to counteract each choice of $\score$. This is possible since, in strong scorability, the adversary of the scorer knows exactly which documents are being scored. We are now ready to prove Theorem \ref{thm:weak_scorable}. Now, picking individual $\query$'s to counteract each choice of $\score$ is no longer option, since $\score$ is allowed to depend on how queries are chosen. However, we show that any choice of $\score$ will be suboptimal on \emph{some} queries.

\begin{proof}[Proof of Theorem \ref{thm:weak_scorable}]
    Construct the problem instance similarly to the proof of Lemma \ref{thm:strong_scorable}. Choose $\docset$ to have two types of documents $(\rel(d),\group(d))=(1,a)$ and $(\rel(d),\group(d))=(1,b)$. Let $\D$ sample from each group equally. Let $p$ be as before, and assume, w.l.o.g., that $\score(1,a) >\score(1,b)$. Now let's look at the difference between the right and left hand sides of Eq. \eqref{eq:weak-optimality}:
    \begin{eqnarray*}
         \max_{\sigma_\query} \Ex_\query\left[\tradeoff_\alpha\left(\sigma_Q, \query\right)\right] - \Ex_\query\left[\tradeoff_\alpha\left(\sigma_{\score}, \query\right) \right]
         &=& \max_{\sigma_Q} - (1-\alpha) \Ex_Q\left[ \unfair(\sigma_Q, Q)\right] + (1-\alpha) \Ex_Q\left[ \unfair(\sigma_{\score}, Q)\right] \\
        &\geq& - (1-\alpha) \Ex_Q\left[ \unfair(\tilde \sigma_Q, Q)\right] + (1-\alpha) \Ex_Q\left[ \unfair(\sigma_{\score}, Q)\right] \\
        &=&  (1-\alpha) \Ex_Q\left[ \unfair(\sigma_{\score}, Q) - \unfair(\tilde \sigma_Q, Q)\right],
    \end{eqnarray*}
    where the utility terms cancel out and we have bounded the optimality gap by choosing a specific ranking $\tilde \sigma_Q$ rather than the optimal.
   


    Let $A$ be the set of all queries $Q$ which contain $p_Q$ documents of $G_a$ and $m-p_Q$ documents of $G_b$, such that $\phi(p_Q)>\phi(m-p_Q)$. We already showed that this set is non-empty, and thus there is a positive probability that $Q$ will be sampled from it. Since we have full control of $\tilde \sigma_Q$ for each $Q$, let $\tilde \sigma_Q$ reverse the order of $\sigma_\score$ or all $Q\in A$ and let it rank similarly to $\sigma_\score$ otherwise (thus the unfairness difference is $0$ in that case).

    It then follows that:
    \begin{eqnarray}
         \max_{\sigma_\query} \Ex_\query\left[\tradeoff_\alpha\left(\sigma_Q, \query\right)\right] - \Ex_\query\left[\tradeoff_\alpha\left(\sigma_{\score}, \query\right) \right]
         &\geq &  (1-\alpha) \Ex_Q\left[ \mathbf{1}\{Q\in A\} \left(\phi(p_Q)-\phi(m-p_Q) \right) \right] \label{eq:general_gap}\\
         &>& 0, \nonumber
    \end{eqnarray}
    where we have used the fact that for $Q\in A$, using the proof of Lemma \ref{thm:strong_scorable}, $\unfair(\sigma, Q)$ is not minimal with the scoring-induced ranking. Therefore, we have a non-zero optimality gap. In other words, Eq. \eqref{eq:weak-optimality} fails, and it follows that scorability does not hold.
   
\end{proof}

Note that if there is a slight discrepancy in relevance between the document types, the gap would still exist as long as utility is no so dominant that fairness is outright ignored. We can also get a quantifiable larger gap, under a convexity assumption on the weights.

\begin{corollary} \label{cor:gaps} Consider the setting of Theorem \ref{thm:weak_scorable}. Assume, additionally, that $w_i$ are convex, i.e.. $(w_{i+1}-w_{i}) - (w_{i}-w_{i-1}) > 0$, and let $m$ be odd. Then, if we uniformly sample $m$ documents from two types of documents $(1,a)$ and $(1,b)$, the trade-off gap of scoring can be bounded by
\begin{eqnarray*}
    \max_{\sigma_\query} \Ex_\query\left[\tradeoff_\alpha\left(\sigma_Q, \query\right)\right] - \Ex_\query\left[\tradeoff_\alpha\left(\sigma_{\score}, \query\right) \right]
    &\geq& (1-\alpha) \sum_{0<p<\frac{m}{2}} \P(\mathsf{Binomial}(m,\tfrac{1}{2})=p) \left[\phi(p)-\phi(m-p)\right] \\
    &\geq& (1-\alpha)\left(\frac{1}{2}-\frac{1}{2^m}\right) \underbrace{\left[\phi\left(\frac{m-1}{2}\right)-\phi\left(\frac{m+1}{2}\right)\right]}_{>0}
\end{eqnarray*}
\end{corollary}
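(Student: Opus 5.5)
The plan is to specialize the general gap bound \eqref{eq:general_gap} from the proof of Theorem \ref{thm:weak_scorable}. Let $\D$ pick type $(1,a)$ or $(1,b)$ with probability $\tfrac12$ each, independently for each document, so that $p_Q$, the number of $G_a$ documents in $Q$, is distributed as $\mathsf{Binomial}(m,\tfrac12)$. As in that proof, assume w.l.o.g. $\score(1,a)>\score(1,b)$; the other case is symmetric, with $p_Q$ replaced by $m-p_Q$, and the binomial is symmetric. The score-induced ranking then has unfairness $\phi(p_Q)$, while reversing it gives $\phi(m-p_Q)$. For $p_Q\in\{0,m\}$ both unfairnesses are $0$ by convention.

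\textbf{Step 1: identify the set $A$.} I will show that convexity of $w$ forces $\phi(p)>\phi(m-p)$ for every integer $0<p<m/2$. This puts all such queries in $A$, and the first displayed inequality then follows directly from \eqref{eq:general_gap}. As in the proof of Lemma \ref{thm:strong_scorable}, write $\phi(j)-\phi(m-j)=\frac1j\sum_{i\le j}c_i-\frac1{m-j}\sum_{i>j}c_i$ with $c_i=w_i+w_{m-i+1}$.

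\begin{itemize}
\item Since $w$ is strictly convex, $c$ is strictly convex. It is also symmetric, $c_i=c_{m+1-i}$, so it strictly decreases up to the middle index and increases afterwards.
\item For $j<m/2$, split $\{j+1,\dots,m\}$ into the middle block $\{j+1,\dots,m-j\}$ with mean $M_j$ and the tail $\{m-j+1,\dots,m\}$. By symmetry the tail has the same mean $A_j$ as $\{1,\dots,j\}$.
\item This gives
\[
\phi(j)-\phi(m-j)=\frac{m-2j}{m-j}\,\bigl(A_j-M_j\bigr).
\]
\item Every middle entry is strictly smaller than every entry among the first $j$, so $A_j>M_j$ and the difference is positive.
\end{itemize}

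\textbf{Step 2: the second inequality.} Since $m$ is odd, symmetry of the binomial gives $\P(0<p_Q<m/2)=\tfrac12-\P(p_Q=0)=\tfrac12-2^{-m}$. The second inequality then follows once each term satisfies
\[
D(p):=\phi(p)-\phi(m-p)\;\ge\;D\!\left(\tfrac{m-1}{2}\right)\qquad\text{for } 0<p<m/2,
\]
i.e., once $D$ is non-increasing on $\{1,\dots,\frac{m-1}{2}\}$. The bracketed term is strictly positive by Step 1 with $j=\frac{m-1}{2}$.

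\textbf{Main obstacle: monotonicity of $D$.} The naive argument fails. Moving $c_{j+1}$ from the second average to the first lowers $A_j$, but its effect on the second average is ambiguous: $c_{j+1}$ exceeds the middle entries yet is below the tail entries. I would first try the factored form. The prefactor $\frac{m-2j}{m-j}$ is decreasing in $j$, so it suffices that $A_j-M_j$ is non-increasing in $j$. Both $A_j$ and $M_j$ decrease in $j$, so this needs a comparison of their rates. If that does not go through, the fallback is to write $D(j)-D(j+1)$ explicitly as a nonnegative combination of the second differences $(w_{i+1}-w_i)-(w_i-w_{i-1})$, via summation by parts. This is where convexity would be used essentially. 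If monotonicity still resisted, I would weaken the second line to keep only the $p=\frac{m-1}{2}$ term, but that would lose the factor $\tfrac12-2^{-m}$, so I expect the summation-by-parts route to be the real work.
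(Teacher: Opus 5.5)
\paragraph{Verdict.} There is a genuine gap: your Step 2 rests on an unproved monotonicity claim, and your main plan for proving it cannot work.

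\paragraph{Step 1.} Step 1 is correct and takes a different route from the paper. Splitting $\{j+1,\dots,m\}$ into the middle block and the mirrored tail, and using that the strictly convex symmetric sequence $c$ is smallest in the middle, gives $\phi(p)>\phi(m-p)$ for $0<p<m/2$ directly. Your handling of the w.l.o.g.\ case and of the factor $\frac12-2^{-m}$ is also fine.

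\paragraph{The gap in Step 2.} The second inequality rests entirely on $D(p)=\phi(p)-\phi(m-p)$ being non-increasing on $\{1,\dots,\frac{m-1}{2}\}$, and you leave this unproved. Your primary plan is to show $A_j-M_j$ is non-increasing, but that quantity need not be. Take $m=7$ and $w=(6,4,2,0,0,0,0)$, which is non-increasing and convex; adding $\epsilon(7-i)^2$ with small $\epsilon>0$ makes it strictly convex without changing what follows. Then $c=(6,4,2,0,2,4,6)$. For $j=1,2,3$ one gets $A_j-M_j=\frac{18}{5},\frac{11}{3},4$, which is increasing, while $D(j)=3,\frac{11}{5},1$ is decreasing. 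So the factorization $\frac{m-2j}{m-j}(A_j-M_j)$ splits $D$ into pieces that do not individually have the right monotonicity. Your summation-by-parts fallback is the right instinct but is not carried out. Your last fallback does not give the stated bound.

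\paragraph{The missing idea.} The paper instead proves that $\phi$ itself is non-increasing. Let $d_k=w_k-w_{k+1}$; these drops are non-increasing by convexity. Summation by parts gives
\[
\phi(j)=\frac1j\sum_{k=1}^j k\,d_k+\frac1{m-j}\sum_{k=j+1}^{m-1}(m-k)\,d_k,
\]
and hence
\[
\phi(j)-\phi(j+1)=\frac{1}{j(j+1)}\sum_{k=1}^j k\,d_k-\frac{1}{(m-j)(m-j-1)}\sum_{k=j+1}^{m-1}(m-k)\,d_k .
\]
Since $\sum_{k=1}^j k=\frac{j(j+1)}{2}$ and $\sum_{k=j+1}^{m-1}(m-k)=\frac{(m-j)(m-j-1)}{2}$, the right side is half the difference between a weighted mean of $d_1,\dots,d_j$ and a weighted mean of $d_{j+1},\dots,d_{m-1}$. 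It is therefore $\ge 0$, and $>0$ under strict convexity.

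Once $\phi$ is non-increasing, $D$ is non-increasing for $p<m/2$ immediately: increasing $p$ lowers $\phi(p)$ and raises $\phi(m-p)$. Your Step 1 also follows for free.

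If you prefer to stay in your symmetrized framework, note that $D$ is exactly $\phi$ computed with $c$ in place of $w$. The argument above uses only that the drops are non-increasing, not that they are non-negative. It therefore applies verbatim to the convex but non-monotone sequence $c$, and shows $D$ is strictly decreasing on $\{1,\dots,m-1\}$.
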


\begin{proof}
    What simplifies things in this corollary is that the convexity of $w$ implies that $\phi$ is non-increasing (further properties about $\phi$ can also be shown, such as convexity, but that's not needed.) To see this, let $d_i=w_i-w_{i+1}$ be the differences of weights, as (non-negative) magnitude of consecutive drops in the sequence. The convexity of $w$ implies that $d_i$ is non-increasing, i.e., $d_1\geq d_2 \geq \cdots$. We can express $w_i$ around $w_{j+1}$ by summing these differences as:
    \[
        w_i = \left\{ \begin{array}{lcl}
        w_{j+1} + \sum_{k=i}^j d_k &;& i\leq j\\
        w_{j+1} - \sum_{k=j+1}^{i-1} d_k&;& i > j+1
        \end{array}
        \right.
    \]
        
    By changing orders of summation, we have:
    \[
        \sum_{i=1}^j w_i = \sum_{i=1}^j \left(w_{j+1} + \sum_{k=i}^j d_k \right) = j~ w_{j+1} + \sum_{k=1}^j k~d_k  
    \]
    and
    \[
        \sum_{i=j+1}^m w_i = w_{j+1} + \sum_{i=j+2}^m \left(w_{j+1} - \sum_{k=j+1}^{i-1} d_k \right) = (m-j)~w_{j+1} - \sum_{k=j+1}^{m-1} (m-k)~d_k. 
    \]

    By combining these, we get
    \[
        \phi(j) = \frac{1}{j} \sum_{i=1}^j w_i - \frac{1}{m-j}\sum_{i=j+1}^m w_i = \frac{1}{j} \sum_{k=1}^j k~d_k  + \frac{1}{m-j} \sum_{k=j+1}^{m-1} (m-k)~d_k.
    \]

    We now can deduce that $\phi$ is non-increasing, because
    \begin{eqnarray*}
        \phi(j)-\phi(j+1) &=& \left(\frac{1}{j}-\frac{1}{j+1} \right)\sum_{k=1}^j k~d_k -\frac{j+1}{j+1}~d_{j+1} + \frac{m-j-1}{m-j} d_{j+1} \\
        &&  + \left(\frac{1}{m-j}-\frac{1}{m-j-1} \right)\sum_{k=j+2}^{m-1} (m-k)~d_k \\
        &=&\frac{1}{j(j+1)} \sum_{k=1}^j k~d_k - \frac{1}{(m-j)(m-j-1)} \sum_{k=j+1}^{m-1} (m-k)~d_k \ \geq\  0,
    \end{eqnarray*}
    where for the non-negativity we used the fact that the $d_k$'s are non-increasing and the fact that we are evaluating the difference (up to a factor of $2$) of the weighted averages of the first $j$ $d_k$'s and the remaining (no greater) $d_k$'s.

    Because $\phi$ is non-increasing, it implies that the symmetric differences $\phi(p)-\phi(m-p)>0$ for all $p<\frac{m}{2}$. What we are thus able to achieve is to exactly specify the whole set $A$ in the proof of Theorem \ref{thm:weak_scorable}. Using the same problem instance there, instead of simply claiming that Eq. \eqref{eq:general_gap} is non-negative we can evaluate it to get the first bound in the claim of this corollary.

    To get the second bound, we note that the fact $\phi$ is non-increasing also implies that the symmetric differences $\phi(p)-\phi(m-p)$ are non-increasing too. Therefore, we can replace them by the smallest value they take $\phi\left(\frac{m-1}{2}\right)-\phi\left(\frac{m+1}{2}\right)$, at $p=\frac{m-1}{2}$. The proof is then complete by pulling that out of the sum, which gives us $\frac{1}{2}$ (the median of a symmetric binomial distribution) minus the chance of $p=0$, when there cannot be any unfairness, by convention.

    To put this in context, note that without the ability to characterize $A$, the only bound we can offer for the gap is exponentially smaller, obtained by multiplying with the probability of a sequence with exactly $p$ documents of the needed type, giving us:
    \[
        \frac{1}{2^m} \binom{m}{p} \left(\phi(p)-\phi(m-p)\right). \qedhere
    \]
\end{proof}

\paragraph{Example.} As an example to illustrate this corollary, consider the weights $w_i=\frac{1}{\log_2(i+1)}$ for position bias. These satisfy the requisite assumptions, including convexity. If we use queries of $m=9$ documents and use the simplified (second) bound we can easily find that $\phi(4)-\phi(5)\approx 0.04$ and thus we deduce that all scoring functions would have an optimality gap of at least $0.02$ (times $1-\alpha$). We can refine this further by using the full (first) bound, and find out that the gap is in fact at least $0.05$ (times $1-\alpha$).

\begin{figure}[h]
    \centering
    \includegraphics[width=0.8\linewidth]{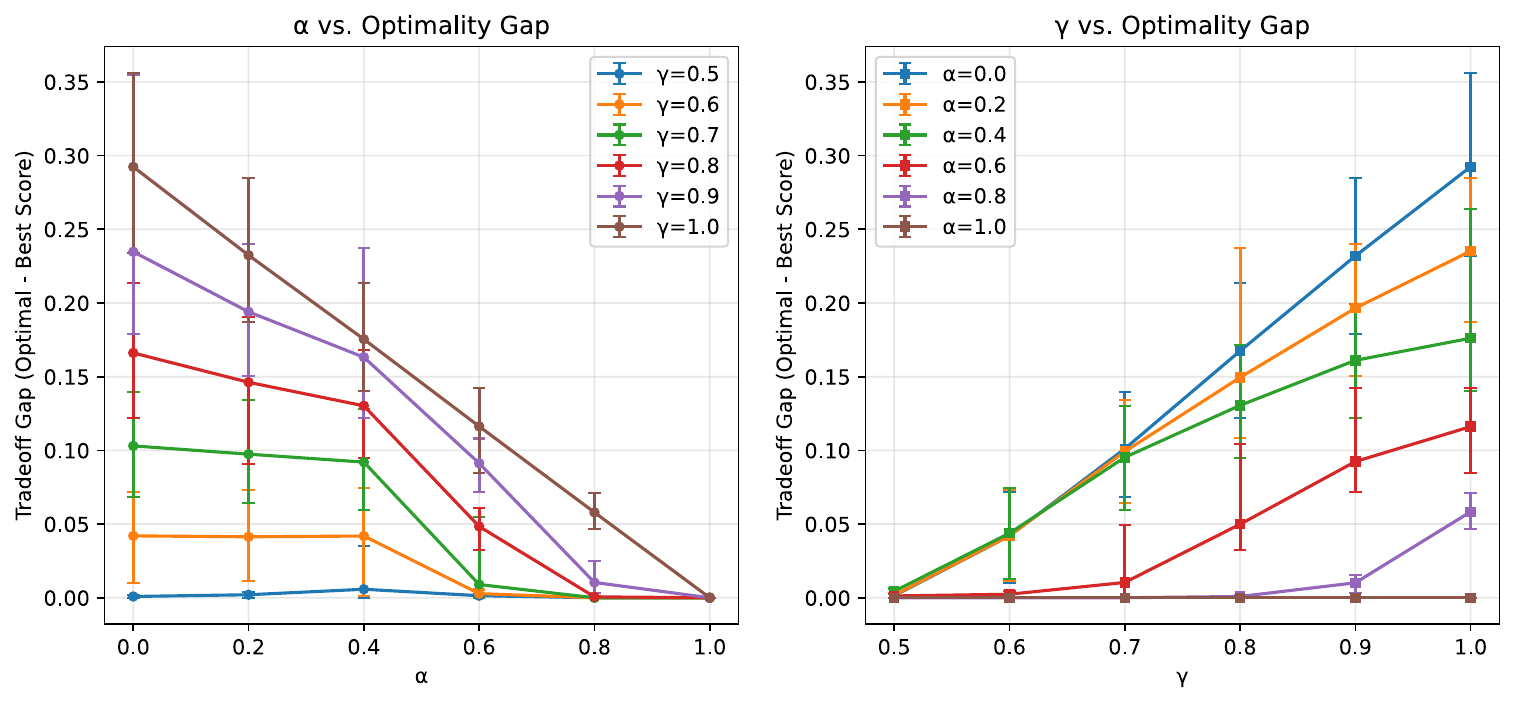}
    \caption{Tradeoff gap between the optimal ranker and the best scorer in the example of Corollary \ref{cor:gaps}, as a function of the desired tradeoff parameter $\alpha$ and asymmetry of relevances $\gamma$ between the two types of documents.} \vspace{10pt}
    \label{fig:corollary-gaps}
\end{figure}

Of course, in practice the gap is even larger than what this theoretical analysis predicts, because the optimal ranking may interleave the documents but a deterministic scorer cannot. (The theory is only comparing against an alternative that also does not interleave.) We can simulate this scenario straightforwardly, and in doing so, we also add an asymmetry $\gamma$ in the relevances between the two groups (one group with relevance $1$, the other with relevance $\gamma$), to assess the impact of utility. In Figure \ref{fig:corollary-gaps}, we illustrate how these two parameters affect the gap. The closest to the theory is the left figure, with $\gamma=1$ and the gap plotted vs. $\alpha$. We can see the expected linear $1-\alpha$ behavior, though the gap is about $6\times$ larger than the theory's lower bound of $0.05$ (it's $\approx 0.30$ at $\alpha=1$.)

As $\gamma$ increases, the gap diminishes because the utility plays a larger role. With $\gamma$ very asymmetrical, utility dictates the tradeoff by requiring to rank all the relevance-$1$ documents ahead, and the gap vanishes. As expected, the diminishing with $\gamma$ is more marked for larger values of $\alpha$, where utility dominates and the overall tradeoff becomes closer to being separable. Indeed, even for $\gamma$ near $1$, the gap can nearly flatten near $\alpha=1$. However, it remains non-negligible whenever $\alpha$ is small enough. It is remarkable that the behavior of this very simple experiment is in complete agreement with our larger set of experiments, including those with real data.

(The experiment still uses $m=9$. Expected utility and unfairness were calculated using 500 random queries. The bars represent 10$^\textrm{th}$ and 90$^\textrm{th}$ percentiles of the gap across these queries.)

\subsection{Proof of Theorem \ref{thm:rand_scorable}}

\begin{proof}[Proof of Theorem \ref{thm:rand_scorable}]

First, consider strong scorability and use the same problem instance $(\docset, \rel, \group)$ as in the proof of Lemma \ref{thm:strong_scorable}. Namely, let $\docset$ consist of two types of documents, with $(\rel(d),\group(d))=$ either $(1,a)$ or $(1,b)$. We will proceed by competing with the score-induced randomized ranking using an alternative randomized ranking. Given any randomized scoring function $\score$, assume for the sake of contradiction that it achieves Eq. \ref{eq:strong-optimality} for all $Q$. Let $X=\score(1,a)$ and $Y=\score(1,b)$ be independent random variables representing a sample of scoring from each document type. Assume, without loss of generality, that $P(X>Y)>0$. Construct $Q$ to have $p$ documents of type $(1,a)$ and $m-p$ of type $(1,b)$. Let $E$ be the event that all $(1,a)$ documents will have greater score than all $(1,b)$ documents. Since scores are sampled independently per document, and based on how $X$ and $Y$ relate, we know that $\P(E)>0$ (possibly small but non-zero).

Consider an alternative randomized ranker that mimics $\sigma_{Q,\score}$ on $E^\mathsf{c}$, but on $E$ it flips the order of all $(1,a)$ and $(1,b)$ documents. This will change the unfairness, given $E$, from $|\phi(p)|$ to $|\phi(m-p)|$, thus reducing it (see the proof of Lemma \ref{thm:strong_scorable}). By the law of total expectation, the expected unfairness of the alternative randomized ranker is thus strictly smaller than that of $\sigma_{Q,\score}$, leading to an improved trade-off and thus a contradiction. Thus $\score$ must fail Eq. \ref{eq:strong-optimality} for this $Q$. Note that the proof is valid for \emph{any} score-based randomized ranker, i.e., none of them can be trade-off optimal for all $Q$. Thus, the trade-off optimal randomized ranker is provably non-score-based.

The proof of scorability follows the same steps as the proof of Theorem \ref{thm:weak_scorable}, but adapted to the above alternative randomized ranker construction. In particular, with positive probability, $Q$ will have $p$ documents of $G_a$ and $m-p$ documents of $G_b$. Conditionally on this $Q$, the event $E$ will again have a positive probability. The alternative randomized ranker will mimic $\sigma_{Q,\score}$ in all cases except under this $Q$ and event $E$, and thus will have an improved trade-off. It follows that scorability once again does not hold.


\end{proof}

\section{Across-Query Fairness} \label{app:across-query}

The notion of unfairness given by Eq. \eqref{eq:unfairness} is per-query. Namely, we measure the discrepancy between groups at the level of individual queries. While this is reasonable in many applications, e.g., suggesting male and female applicants in response to a single job posting, there are other instances where fairness is better measured across queries, e.g., whether certain movie categories are reaching a wide enough audience. We here demonstrate that even in that setting, scoring can be suboptimal.

We assume, as in the case of scorability, that queries are formed based on a distribution $\query \sim \D$. We focus on deterministic scoring. The notion of utility is mostly unchanged, except that we now average over all queries:
\begin{equation} \label{eq:across-utility}
    \overline{\util}(\sigma, \D) = \Ex_{\query \sim \D}\Barg*{\util(\sigma, \query)}.
\end{equation}

The notion of unfairness, however, is more radically affected. Instead of aggregating interaction per query, we aggregate it across queries:

\begin{equation} \label{eq:across-queries-unfairness}
    \begin{multlined} 
    \overline{\unfair}(\sigma, \query) = \left| \Ex_{\query \sim \D} \left[ \frac{1}{m \P (G_a)} \sum_{i:\group(\sigma_\query(i)) = a} \!\!\!w_i \rel(\sigma_\query(i)) \right.\right. \\ \left. \left. - \frac{1}{m \P (G_b)} \sum_{i:\group(\sigma_\query(i))=b} \!\!\!w_i \rel(\sigma_\query(i)) \right] \right|  
    \end{multlined}
\end{equation}
where
\[
    \P (G) = \frac{1}{m} \Ex_{\query \sim \D} \left[ \sum_{i} \mathds{1}\{\sigma_\query (i) \in G\} \right].
\]
Contrast the inside-absolute-value expectation of \eqref{eq:across-queries-unfairness} with the outside-absolute-value expectations of \eqref{eq:expected-unfairness} and \eqref{eq:weak-optimality}.
\begin{definition}[Across-Query Scorability] \label{def:across-query} A pair of across-query utility and unfairness functions is called scorable if
    $\forall \alpha$, $\forall \D$, $\exists \score_{\alpha,\D}$ such that if $\docset = \{d_1, d_2, ..., d_m\} \sim$ i.i.d. from $\D$ are sorted using $\sigma_\score = \mathsf{sort}(\docset, \score_{\alpha,\D})$, we achieve
    \begin{align*} \label{eq:across-query-scorable}
        \max_{\sigma} \quad \alpha \cdot \overline{\util}(\sigma, \D) - (1-\alpha) \cdot \overline{\unfair}(\sigma, \D).
    \end{align*}
\end{definition}

\textbf{Counterexample to Across-Query Scorability.} To demonstrate that even in the across-query setting scorability is not to be taken granted, we construct a specific example where we can tractably solve for the optimal ranking.

In particular, we sample a query $\query=\{\!\{d_1, \ldots, d_m\}\!\}$ from document distribution $\D$ i.i.d. (with replacement), where $\D$ is a categorical distribution of $k$ documents. We assume $m \geq k$. The number of all possible multisets is given $\multiset{k}{m}$. Within each multiset, the number of possible permutations is given by $\binom{m}{m_1,\ldots,m_k}$. Recall that the maximization in Definition \ref{def:across-query} determines, for every multiset $\query$, the particular permutation representing the way they would be ranked. Unlike the per-query instance where the optimization could be solved for each query separately, the across-query optimization needs to take into account all queries simultaneously. 


To solve this, we propose the following linear relaxation. A multiset is equivalently identified via the corresponding histogram $H=\textsf{Histogram}(\query)$. Let $J(H)$ be the set of all permutations within the histogram $H$, and let $j\in J(H)$ represent the indices of individual permutations. In other words, choices of $j$ in each $H$ collectively define $\sigma$. Each $j$ results in a contribution to utility, which we denote by $\mathcal U_{H,j}$, as well as a contribution to the unfairness term within the absolute values, which we denote by $\mathcal V_{H,j}$. These contributions can be computed offline. If we relax the selection of the permutation to a convex combination over permutations given by coefficient $x_{j|H}$, akin to randomized sampling, the maximization of the trade-off can be written as a linear program as follows:
\begin{align*}
    \text{maximize} \quad & \alpha \sum_{H} p(H) \sum_{\substack{j \in J(H)}} x_{j|H} \, \util_{H, j} - Z \\
    \text{subject to} \quad & (1 - \alpha) \sum_{H} p(H) \sum_{\substack{j \in J(H)}} x_{j|H} \, \unfair_{H, j} \leq Z \\
    & -(1 - \alpha) \sum_{H} p(H) \sum_{\substack{j \in J(H)}} x_{j|H} \, \unfair_{H, j} \leq Z \\
    & \sum_{\substack{j \in J(H)}} x_{j|H} = 1, \quad \forall H \\
    & x_{j|H} \geq 0, \quad \forall H, j \\
    & Z \geq 0 \,.
\end{align*}
We choose a uniform distribution for $\D$, which then results each sequence to be equally likely. The probability $p(H)$ for a histogram $H$ is then computed using the multinomial coefficient, representing the number of ways to arrange elements according to the counts in $H$:
\begin{align*}
    p(H) = \frac{m!}{h_1! \, h_2! \, \dots \, h_k!} \left( \frac{1}{k} \right)^m.
\end{align*}

As for the PL model, we choose $k = 5$, $m = 8$, with the following composition of documents $\left(\rel(d),\group(d)\right) = (1,a), (2,b), (3,a),$ $(4,b), (5,a)$. Weights are chosen to be the standard position bias $w_i = 1 /(\log_2(i)+1)$. This is the same as the synthetic data set described in Section \ref{sec:evaluation}. To compare with scoring, we optimize over all possible deterministic scorings by considering all $k!$ different orderings that they induce on the documents. The results are in Table \ref{tab:scoring-vs-lp}, highlighting a clear gap between scoring and the optimal trade-offs. It is worth noting that despite the linear program being a relaxation, it predominantly produces pure strategies, i.e., a single ranking choice per multiset.

\begin{table}[]
    \centering
    \begin{tabular}{r|rrr|rrr}
    \toprule
        & \multicolumn{3}{l}{\textbf{Optimal Scoring}}                & \multicolumn{3}{l}{\textbf{Linear Program}}                 \\
    $\boldsymbol{\alpha}$ & $\boldsymbol{\util}$ & $\boldsymbol{\unfair}$ & $\boldsymbol{\tradeoff}$ & $\boldsymbol{\util}$ & $\boldsymbol{\unfair}$ & $\boldsymbol{\tradeoff}$ \\ \midrule \midrule
    0   & 11.75 & 0    & 0     & 11.16 & 0    & 0     \\
    0.1 & 11.75 & 0    & 1.18  & 13.21 & 0    & 1.32  \\
    0.2 & 13.27 & 0.33 & 2.39  & 13.21 & 0    & 2.64  \\
    0.3 & 13.27 & 0.33 & 3.75  & 13.21 & 0    & 3.96  \\
    0.4 & 13.27 & 0.33 & 5.11  & 13.21 & 0    & 5.28  \\
    0.5 & 13.27 & 0.33 & 6.47  & 13.21 & 0    & 6.61  \\
    0.6 & 13.36 & 0.45 & 7.84  & 13.21 & 0    & 7.93  \\
    0.7 & 13.36 & 0.45 & 9.22  & 13.21 & 0    & 9.25  \\
    0.8 & 13.36 & 0.45 & 10.6  & 13.21 & 0    & 10.57 \\
    0.9 & 13.36 & 0.45 & 11.98 & 13.21 & 0    & 11.89 \\
    1   & 13.36 & 0.45 & 13.36 & 13.36 & 0.45 & 13.36 \\ 
    \bottomrule
    \end{tabular}
    \caption{Across-query utility, unfairness, and trade-off: suboptimality of scoring vs. direct optimization.}
    \label{tab:scoring-vs-lp}
\end{table}

\section{Beam Trade-off Search Algorithm Details} \label{app:beam-details}

\subsection{Mathematical Description}

Beam Trade-off Search (Algorithm \ref{alg:beam-search}) can be mathematically described as follows. We initialize $B$ lists $(\sigma_{ D}^{1}, \dots, \sigma_{ D}^{B})$, each with one of the most relevant documents from each group. Each of these lists, $\mathfrak{b} = 1, \cdots, B$ maintains its own copy of the pair of group-wise sorted documents, $(D^\mathfrak{b}_a,D^\mathfrak{b}_b)$, with the added document $\sigma^\mathfrak{b}_D(0)$ removed. Let $d^{\mathfrak{b}}_\group \in D^\mathfrak{b}_\group$ be sorted in decreasing order of relevance with increasing $\ell=1,\cdots,|D^\mathfrak{b}_\group|$. For subsequent positions $i > 1$, we collect the most highly relevant document from each group, like in greedy, append them to the documents in each list to create temporary ranking orders, and calculate their trade-offs, as follows. The documents to be evaluated can be written as $\mathfrak{D}^\mathfrak{b} = \{d^{\mathfrak{b}}_\group~:~\group \in \{a,b\}\}$. Thus, there are $2B$ partial rankings considered at each step
\[
    \mathfrak{S} = \left\{ \sigma_{ D}^{\mathfrak b'} (\Barg{0, \dots, i-1}) \cup \{d\} : \mathfrak{b'} = 1, \cdots, B, d\in \mathfrak{D}^\mathfrak{b'}, \right\}.
\]
We sort them in decreasing order of their trade-off values and only keep the top-$B$ lists, to maintain our beam width. We can write this as, for $\mathfrak{b} = 1, \cdots, B$:
\begin{align}
    \sigma_{ D}^{\mathfrak b} &= \underset{\sigma \in \mathfrak{S}}{\textrm{top-}\mathfrak{b} \arg\max} 
        ~\tradeoff\left(\sigma  \right)
\end{align}
We then maintain only the surviving lists' sorted groups, and remove from them each list's newly appended document. We repeat this process until all the remaining documents in the set have been ranked. In the end, the ranking with the best trade-off among the last $B$ rankings is returned.

\begin{figure}[h]
    \centering
    \includegraphics[width=\linewidth]{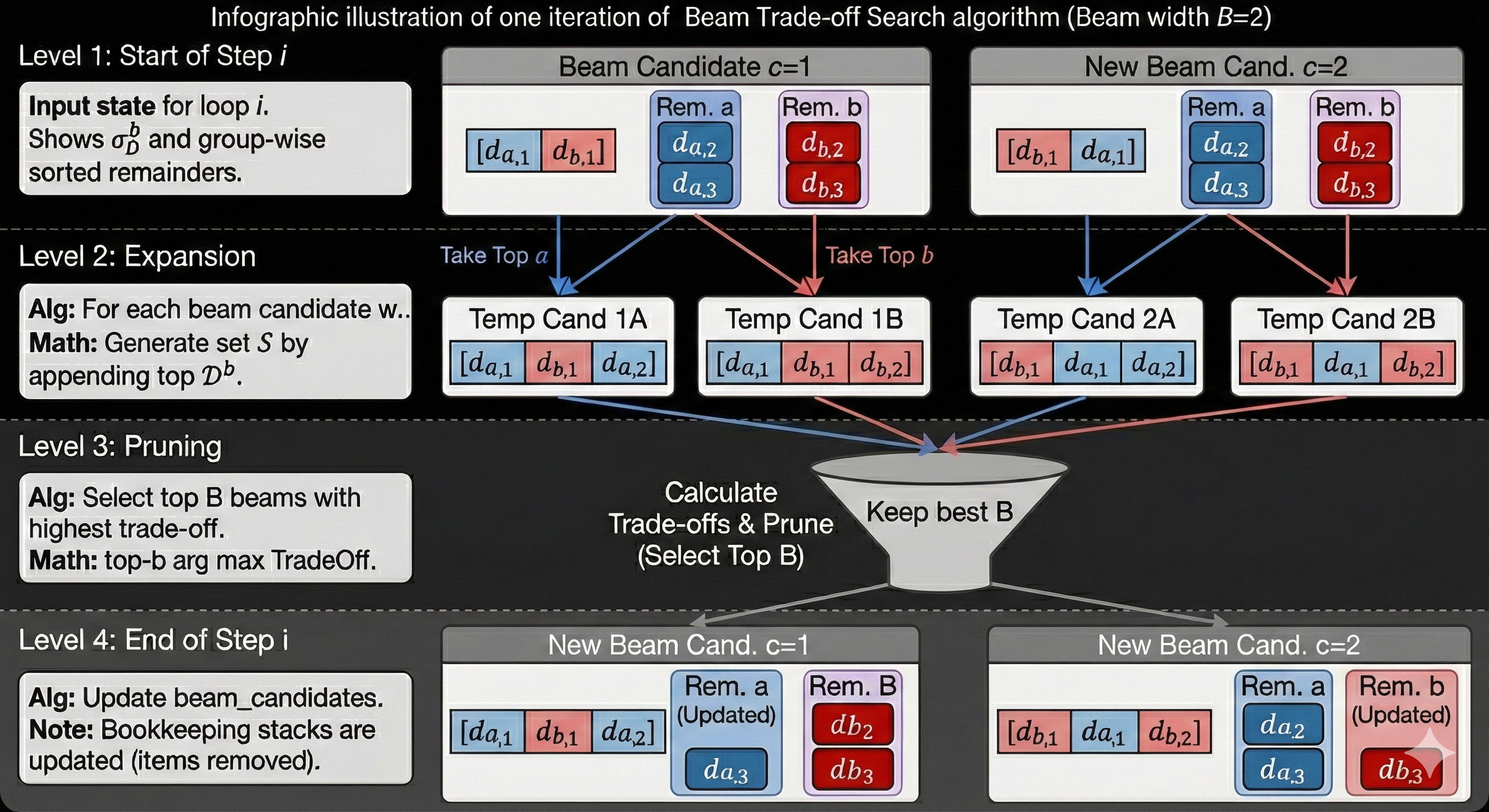}
    \caption{AI-Generated Illustration of Beam Trade-off Search (Algorithm \ref{alg:beam-search})}
    \label{fig:beam-search-ai}
\end{figure}

\subsection{Illustration}

We give an illustration of Beam Trade-off Search. Figure \ref{fig:beam-search-ai} is AI-generated (using Gemini with Nano Banana Pro), based on the \LaTeX pseudocode of Algorithm \ref{alg:beam-search} and the above mathematical description.

\vspace{24pt}

\section{Additional Experiments}
\label{sec:other-datasets}
We now give two additional real data experiments illustrating the presence of optimality gaps for scoring. 

\subsection{German Credit Dataset}
\label{sec:german-credit}
\textbf{Dataset Preparation.}
We follow a similar process to create queries for ranking from the the German Credit dataset as we did from the COMPAS dataset (see~\ref{sec:eval_data}).
The dataset contains 1,000 rows of 20 attributes that comprise an individual's creditworthiness and true relevances. Relevances are binary in this case, i.e., creditworthy ($\rel$=1) or not ($\rel$=0). We maintain the query size, ratio of positive to negative class, and the train-test split as in the COMPAS dataset. The only difference being that we use \texttt{sex=\{Male, Female\}} as the protected attribute in this case.

\textbf{Results.}
Since the relevances in this dataset are binary, we train a Logistic Regression model that maps features to relevances, $\rel \in [0, 1]$. The probability of each individual being creditworthy ($\rel = 1$) is treated as the target variable for this prediction. 
However, the regression model suffers from lower prediction accuracy, which could be explained by the binary relevances compared to multi-valued relevances for the COMPAS dataset.
To counter the effect of noisy learned relevance on the greedy and beam search, we modify the algorithms to sample using a softmax with temperature $\beta=10 \sqrt{\frac{\alpha}{(1-\alpha)}}$ instead of merely sorting group-wise documents based on the relevances. 
Lower values of $\alpha$ make $\beta$ smaller,  making the sample more random and steering the greedy and beam search towards lower utility and lower unfairness. As $\alpha$ increases and $\beta$ gets larger, the softmax sample has a higher probability of being ordered by the learned relevances. 
We train the PL-RELAX model as specified earlier. Since the performance of the trained Plackett-Luce model is not great, in order to rule out optimization-related issues, we also adapt the optimizer of the Fair-PGRank method proposed by ~\citep{singhPolicyLearningFairness}, within their publicly available code, as an additional baseline. Essentially, the difference between the two methods is that Fair-PGRank (with our own objective function, rather than that of \citep{singhPolicyLearningFairness}) uses REINFORCE without the variance reduction of PL-RELAX. The fact that both methods perform similarly gives support to the convergence of the PL-RELAX model on this dataset.

We present the results in Fig.~\ref{fig:alt-scoring-gcredit}. 
Both Plackett-Luce models (PL-RELAX and Fair-PGRank) remain dominated by our ex-post ranking methods (greedy and beam search) for the German Credit dataset. The latter shows excellent trade-offs by maintaining high utility while keeping the unfairness low at all $\alpha$ values. Notably, greedy performs better than the beam search at $\alpha \leq 0.2$ as the myopic choice for greedy results in lower unfairness and similar utility. At $\alpha > 0.2$, beam search provides the optimal trade-offs compared to other approaches and the gap is significant with respect to PL-RELAX and Fair-PGRank.
We observe that the trade-off curves do not look convex for the German Credit dataset, as they did for COMPASS and the synthetic dataset. This is likely due to the fact that this dataset has binary relevances ($\rel\in \{0, 1\}$) and the regression and the scoring-based PL models learn noisy estimates of the relevances, affecting the lower range of $\alpha$ values where the tradeoff curve is steep and small changes in trading off the estimated utility and fairness can result in large changes in the (plotted) true utility and fairness.



\begin{figure}[h]
    \centering
    \includegraphics[width=0.5\linewidth]{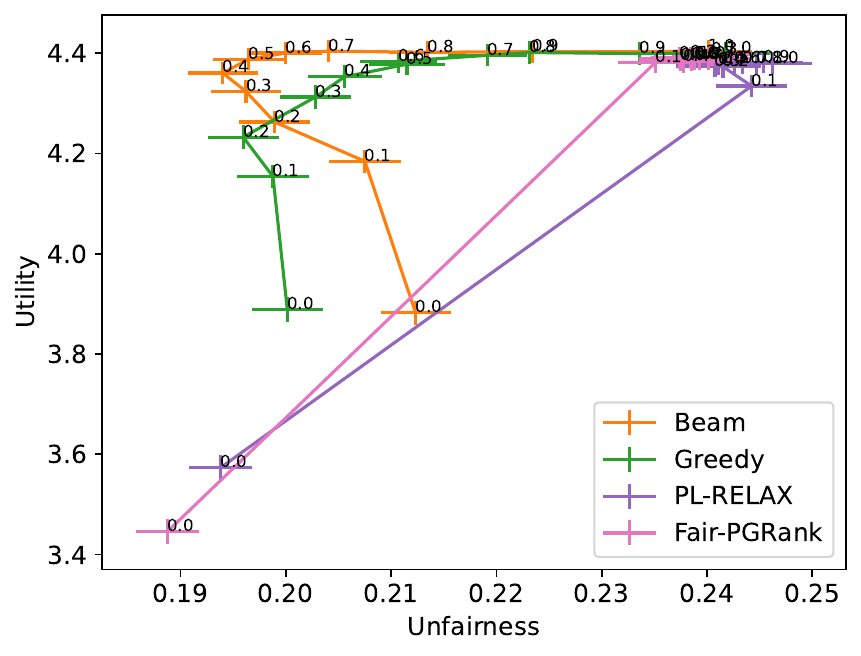}
    \caption{Utility and unfairness values for $\alpha = \{0, \ldots, 1\}$ achieved by the beam search (orange), greedy (green), and the trained PL model (purple) on the German Credit dataset.}
    \label{fig:alt-scoring-gcredit}
\end{figure}

\subsection{MovieLens 100K Dataset}
\label{sec:movielens}
\begin{figure*}[t] 
    \centering
    \begin{subfigure}[t]{0.45\textwidth}
        \includegraphics[width=\linewidth]{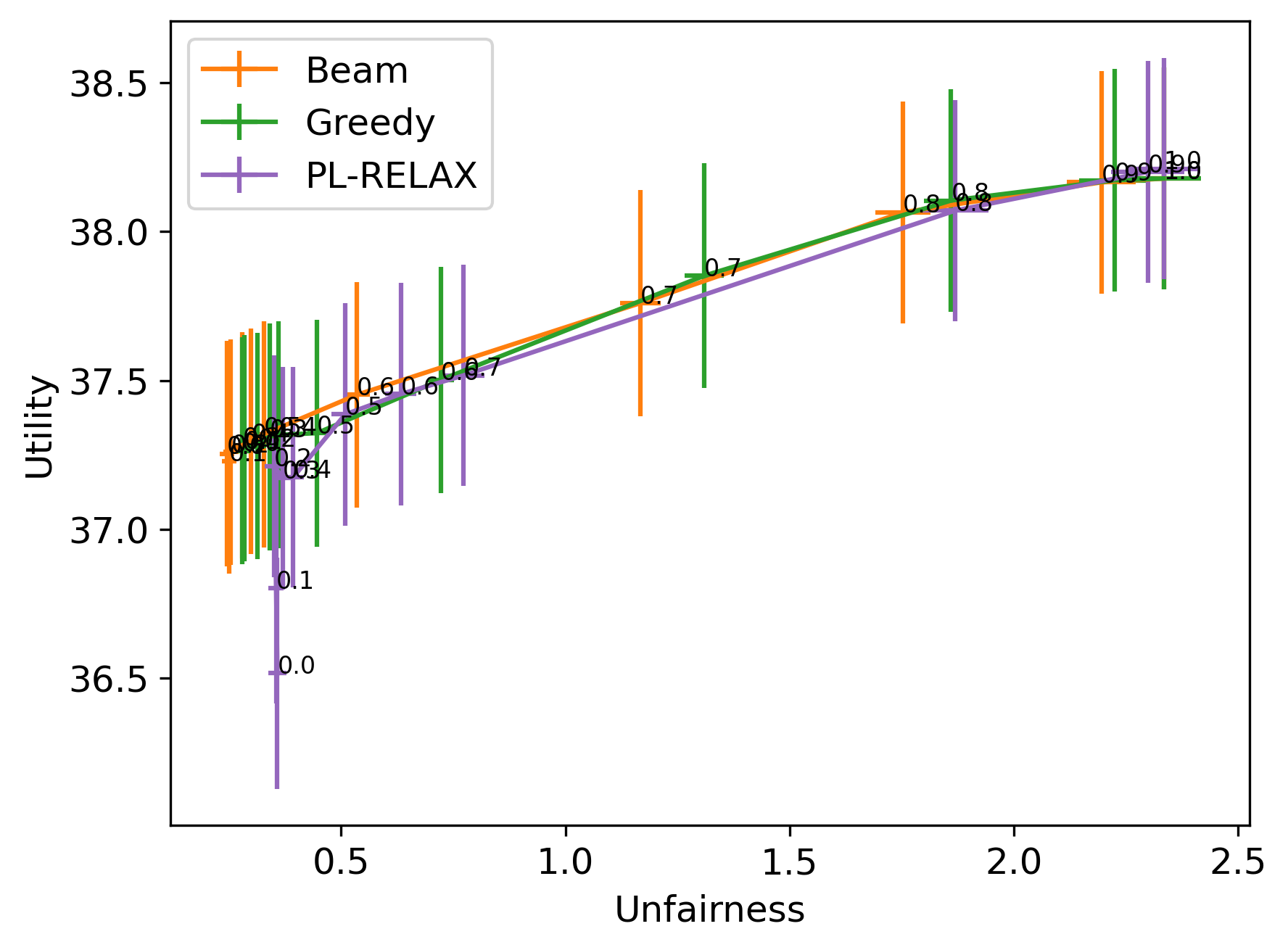}
        \caption{Greedy and Beam Search use the relevances learned from a Linear Regression model.}
        \label{fig:ml100k-linear-model}
    \end{subfigure}
    \hspace{3em} 
    \begin{subfigure}[t]{0.45\textwidth}
        \includegraphics[width=\linewidth]{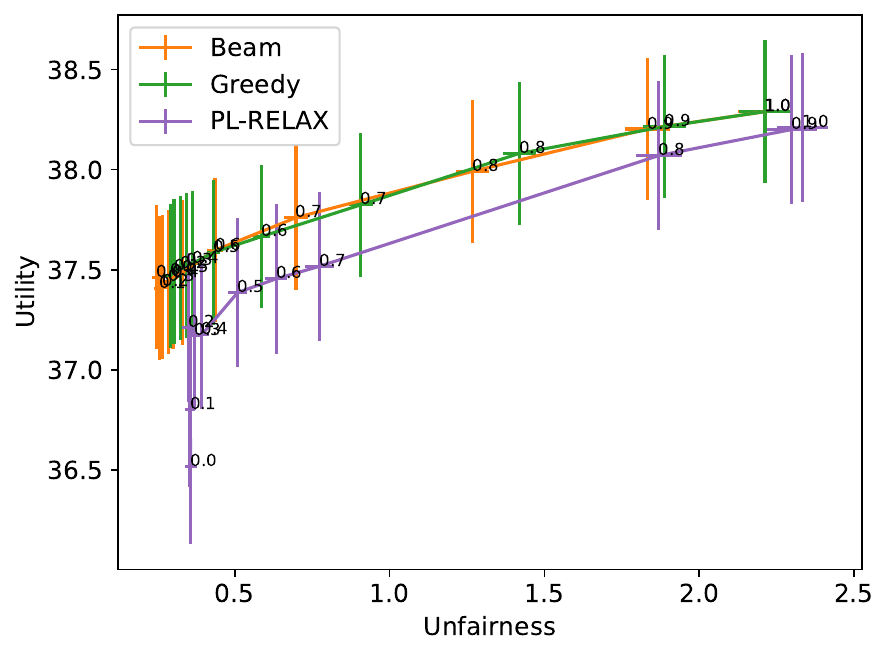}
        \caption{Greedy and Beam Search use the relevances learned from a Gradient Boosting Regression model.}
        \label{fig:ml100k-gb-model}
    \end{subfigure}
    \caption{Utility and unfairness values for $\alpha = \{0, \ldots, 1\}$ achieved by beam search (orange), greedy (green) and the trained PL model (purple) on the Movielens 100K dataset.}
    \label{fig:ml100k-exp}
\end{figure*}

\textbf{Data Preprocessing.} 
MovieLens 100K is a popular dataset used for learning-to-rank tasks~\cite{harperMovieLensDatasetsHistory2016}. As the name suggests, the dataset contains 100,000 movie ratings for 1,682 movies from 943 users. A rating is a number ranging from 1 to 5, where a higher rating is more liked by the user. Following the methodologies of~\cite{naghiaeiCPFairPersonalizedConsumer2022,ovaisiFairnessInteractionRanking2024}, we select the top 20\% of movies with the highest average rating as popular and the rest as unpopular. We use this popularity measure as the protected attribute to calculate the unfairness. For each user, we randomly sample 20 of their rated movies to create a query, such that each query has a fixed size. Based on the 21 features describing a movie, including release year, genre (1-hot encoded), and popularity, we train two regression models to predict point-wise relevances for the movies. Then, we train the PL-RELAX model as described in Section~\ref{sec:experiments}.

\textbf{Results.} 
The dataset is challenging to learn as it contains many sparse features and varied ratings for the same movie among users. Therefore, we explore two regression models to learn ratings from features whose outputs are used by greedy and beam search. First, we train a Linear Regression model with mean squared error loss using an SGD optimizer. The linear model yields $R^{2}=0.09$, suggesting low explanatory power. Second, we use a Gradient Boosting Regression model, an ensemble method that sequentially trains multiple learners to reduce the error of the previous learner. The ensemble model gets $R^{2}=0.11$, showing a minor improvement over the linear model. These are both used as inputs to greedy and beam search approaches. We use the same linear model architecture as for the other datasets to map features to PL coefficients, and we optimize it with the PL model to maximize the trade-off.

The results for both relevance models are shown in Fig.~\ref{fig:ml100k-exp}. Although the linear model demonstrates a weak fit, greedy and beam search performances resemble that of the PL-RELAX model (Fig.~\ref{fig:ml100k-linear-model}). When the slightly more powerful Gradient Boosting model is used for greedy and beam search, they both exhibit better trade-off values at each $\alpha$ level. The challenges of learning from sparse features and varied ratings limit beam search in outperforming greedy search using the same learned relevances. More generally, these results showcase an additional benefit of using decoupled models to learn relevance and then applying our proposed heuristic approaches. 
We can choose better relevance models independently of the ranking objectives, then combine the model outputs with greedy or beam search to achieve a desired utility-fairness trade-off, whereas scoring-based methods like Plackett-Luce must be retrained as the desired trade-off changes.

\end{document}